\begin{document}
	\title{New Width Parameters for Independent Set: One-sided-mim-width and Neighbor-depth\thanks{Tuukka Korhonen was supported by the Research Council of Norway via the project BWCA (grant no. 314528).}}
	\titlerunning{New Width Parameters for Independent Set}
	%
	%
	\author{Benjamin Bergougnoux\inst{1}\orcidID{0000-0002-6270-3663} \and
		Tuukka Korhonen\inst{2}\orcidID{0000-0003-0861-6515} \and
		Igor Razgon\inst{3}}
	\authorrunning{B. Bergougnoux et al.}
	%
	\institute{University of Warsaw, Poland\\\email{benjamin.bergougnoux@mimuw.edu.pl} \and
		University of Bergen, Norway\\\email{tuukka.korhonen@uib.no} \and
		Birkbeck University of London, United Kingdom\\\email{i.razgon@bbk.ac.uk}}
	\maketitle              
	\begin{abstract}
		We study the tractability of the maximum independent set problem from the viewpoint of graph width parameters, with the goal of defining a width parameter that is as general as possible and allows to solve independent set in polynomial-time on graphs where the parameter is bounded.
		We introduce two new graph width parameters: one-sided maximum induced matching-width (o-mim-width) and neighbor-depth.
		O-mim-width is a graph parameter that is more general than the known parameters mim-width and tree-independence number, and we show that independent set and feedback vertex set can be solved in polynomial-time given a decomposition with bounded o-mim-width.
		O-mim-width is the first width parameter that gives a common generalization of chordal graphs and graphs of bounded clique-width in terms of tractability of these problems.

		The parameter o-mim-width, as well as the related parameters mim-width and sim-width, have the limitation that no algorithms are known to compute bounded-width decompositions in polynomial-time.
		To partially resolve this limitation, we introduce the parameter neighbor-depth.
		We show that given a graph of neighbor-depth $k$, independent set can be solved in time $n^{O(k)}$ even without knowing a corresponding decomposition.
		We also show that neighbor-depth is bounded by a polylogarithmic function on the number of vertices on large classes of graphs, including graphs of bounded o-mim-width, and more generally graphs of bounded sim-width, giving a quasipolynomial-time algorithm for independent set on these graph classes.
		This resolves an open problem asked by Kang, Kwon, Str{\o}mme, and Telle [TCS 2017].
		
		\keywords{Graph width parameters \and Mim-width \and Sim-width \and Independent set}
	\end{abstract}

	\section{Introduction}
	Graph width parameters have been successful tools for dealing with the intractability of NP-hard problems
	over the last decades.
	While tree-width~\cite{DBLP:journals/jct/RobertsonS84} is the most prominent width parameter due to its numerous algorithmic and structural properties, only sparse graphs can have bounded tree-width.
	To capture the tractability of many NP-hard problems on well-structured dense graphs, several graph width parameters, including clique-width~\cite{DBLP:journals/jcss/CourcelleER93}, mim-width~\cite{vatshelle:thesis}, Boolean-width~\cite{DBLP:journals/tcs/Bui-XuanTV11}, tree-independence number~\cite{dallard2022firstpaper,DBLP:conf/soda/Yolov18}, minor-matching hypertree width~\cite{DBLP:conf/soda/Yolov18}, and sim-width~\cite{DBLP:journals/tcs/KangKST17} have been defined.
	A graph parameter can be considered to be more general than another parameter if it is bounded whenever the other parameter is bounded.
	For a particular graph problem, it is natural to look for the most general width parameter so that the problem is tractable on graphs where this parameter is bounded.
	In this paper, we focus on the maximum independent set problem (\textsc{Independent Set}).
	
	Let us recall the standard definitions on branch decompositions.
	Let $V$ be a finite set and $\f  : 2^{V} \rightarrow \mathbb{Z}_{\ge 0}$ a symmetric set function, i.e., for all $X \subseteq V$ it holds that $\f(X) = \f(V \setminus X)$.
	A branch decomposition of $\f$ is a pair $(T, \delta)$, where $T$ is a cubic tree and $\delta$ is a bijection mapping the elements of $V$ to the leaves of $T$.
	Each edge $e$ of $T$ naturally induces a partition $(X_e,Y_e)$ of the leaves of $T$ into two non-empty sets, which gives a partition $(\delta^{-1}(X_e), \delta^{-1}(Y_e))$ of $V$.
	We say that the width of the edge $e$ is $\f(e) = \f(\delta^{-1}(X_e)) = \f(\delta^{-1}(Y_e))$, the width of the branch decomposition $\cD=(T, \delta)$---denoted by $\f(\cD)$---is the maximum width of its edges, and the branchwidth of the function $\f$ is the minimum width of a branch decomposition of $\f$.
	When $G$ is a graph and $\f : 2^{V(G)} \rightarrow \mathbb{Z}_{\ge 0}$ is a symmetric set function on $V(G)$, we say that the \emph{$\f$-width} of $G$ is the branchwidth of $\f$.

	Vatshelle~\cite{vatshelle:thesis} defined the maximum induced matching-width (mim-width) of a graph to be the 
	$\mim$-width where $\mim(A)$ for a set of vertices $A$ is defined to be the size of a maximum induced matching in the bipartite graph $G[A, \overline{A}]$ given by edges between $A$ and $\overline{A}$, where $\overline{A} = V(G) \setminus A$.
	He showed that given a graph together with a branch decomposition of mim-width $k$, any locally checkable vertex subset and vertex partitioning problem (LC-VSVP), including \textsc{Independent Set}, \textsc{Dominating Set}, and \textsc{Graph Coloring} with a constant number of colors, can be solved in time $n^{\OO(k)}$.
	Mim-width has gained a lot of attention recently~\cite{bergougnoux2023logic,DBLP:journals/jgt/BrettellHMPP22,DBLP:journals/ipl/BrettellHMP22,DBLP:journals/tcs/JaffkeKST19,DBLP:journals/dam/JaffkeKT20,DBLP:journals/algorithmica/JaffkeKT20,DBLP:journals/corr/abs-2205-15160}.
	While mim-width is more general than clique-width and bounded mim-width captures many graph classes with unbounded clique-width (e.g. interval graphs), there are many interesting graph classes with unbounded mim-width where \textsc{Independent Set} is known to be solvable in polynomial-time.
	Most notably, chordal graphs, and even their subclass split graphs, have unbounded mim-width, but it is a classical result of Gavril~\cite{DBLP:journals/siamcomp/Gavril72} that \textsc{Independent Set} can be solved in polynomial-time on them.
	More generally, all width parameters in a general class of parameters that contains mim-width and was studied by Eiben, Ganian, Hamm,  Jaffke, and  Kwon~\cite{DBLP:conf/innovations/EibenGHJK22} are unbounded on split graphs.

	With the goal of providing a generalization of mim-width that is bounded on chordal graphs, Kang, Kwon, Str{\o}mme, and Telle~\cite{DBLP:journals/tcs/KangKST17} defined the parameter \emph{special induced matching-width} (sim-width).
	Sim-width of a graph $G$ is the 
	$\sw$-width where $\sw(A)$ for a set of vertices $A$ is defined to be the maximum size of an induced matching in $G$ whose every edge has one endpoint in $A$ and another in $\overline{A}$.
	The key difference of $\mim$ and $\sw$ is that $\mim$ ignores the edges in $G[A]$ and $G[\overline{A}]$ when determining if the matching is induced, while $\sw$ takes them into account, and therefore the sim-width of a graph is always at most its mim-width.
	Chordal graphs have sim-width at most one~\cite{DBLP:journals/tcs/KangKST17}.
	However, it is not known if \textsc{Independent Set} can be solved in polynomial-time on graphs of bounded sim-width, and indeed Kang, Kwon, Str{\o}mme, and Telle asked as an open question if \textsc{Independent Set} is NP-complete on graphs of bounded sim-width~\cite{DBLP:journals/tcs/KangKST17}.

	In this paper, we introduce a width parameter that for the \textsc{Independent Set} problem, captures the best of both worlds of mim-width and sim-width.
	Our parameter is inspired by a parameter introduced by Razgon~\cite{DBLP:conf/iwpec/Razgon21} for classifying the OBDD size of monotone 2-CNFs.
	For a set of vertices $A$, let $E(A)$ denote the edges of the induced subgraph $G[A]$.
	For a set $A \subseteq V(G)$, we define the upper-induced matching number $\umim(A)$ of $A$ to be the maximum size of an induced matching in $G - E(\overline{A})$ whose every edge has one endpoint in $A$ and another in $\overline{A}$.
	Then, we define the \emph{one-sided maximum induced matching-width} (o-mim-width) of a graph to be the
	$\omim$-width where $\omim(A)=\min(\umim(A), \umim(\overline{A}))$.
	In particular, o-mim-width is like sim-width, but we ignore the edges on one side of the cut when determining if a matching is induced.
	Clearly, the o-mim-width of a graph is between its mim-width and sim-width.
	Our first result is that the polynomial-time solvability of \textsc{Independent Set} on graphs of bounded mim-width generalizes to bounded o-mim-width. Moreover, we show that the interest of o-mim-width is not limited to \textsc{Independent Set} by proving that the \textsc{Feedback Vertex Set} problem is also solvable in polynomial time on  graphs of bounded o-mim-width.

	\begin{theorem}\label{thm:omim:algo}
		Given an $n$-vertex graph together with a branch decomposition of o-mim-width $k$, \textsc{Independent Set} and \textsc{Feedback Vertex Set} can be solved in time $n^{\OO(k)}$.
	\end{theorem}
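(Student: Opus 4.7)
The plan is to solve both problems by dynamic programming on the input branch decomposition $(T, \delta)$. Rooting $T$ at an arbitrary edge, for each edge $e$ with cut $(A_e, \overline{A_e})$ I would compute a table of at most $n^{\OO(k)}$ representative partial solutions restricted to $A_e$. With this size bound at every cut and a standard combine step, the total running time becomes $n^{\OO(k)}$.

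The central technical ingredient, which I will call the \emph{representative lemma}, is: for any cut $(A, \overline{A})$ with $\umim(A) \le k$, the number of distinct traces $N(I) \cap \overline{A}$ taken over all independent sets $I$ of $G[A]$ is at most $n^{\OO(k)}$, and one can compute a corresponding set of trace-indexed representatives efficiently. Since $\omim(A) = \min(\umim(A), \umim(\overline{A}))$, at every cut of $\omim$-width $\le k$ either the lemma applies on the $A$ side (and we index representatives by $N(I) \cap \overline{A}$) or on the $\overline{A}$ side (and we index by ``co-traces'' $X = N(J) \cap A$: for each $X$ we keep a candidate representative $I \subseteq A \setminus X$ of maximum size). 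In either case we maintain only $n^{\OO(k)}$ representatives per cut.

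I expect the proof of the representative lemma to follow the template of the classical Belmonte--Vatshelle-style bound on the number of neighborhoods in a bipartite cut by $n^{\OO(\mim)}$: a greedy/exchange argument shows that each trace is witnessed by at most $\umim(A)$ vertices of $A$, yielding $n^{\OO(k)}$ possibilities. The adaptation from $\mim$ to $\umim$ needs care because $\umim$ allows edges in $G[A]$ between matching endpoints; however, restricting to traces of \emph{independent} sets $I$ in $G[A]$ makes the $A$-side endpoints of any such matching pairwise non-adjacent, which is exactly the structure that lets the exchange argument go through.

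For \textsc{Independent Set}, the resulting DP combines, at each internal node, all pairs $(I_1, I_2)$ of children's representatives, keeps those whose union is independent in $G[A_1 \cup A_2]$, and reduces the resulting $n^{\OO(k)}$ candidates to $n^{\OO(k)}$ parent representatives via the lemma. For \textsc{Feedback Vertex Set}, I would follow the same scheme but use the representative-based DP framework developed for FVS on mim-width: the state at each cut additionally records the partition of the partial induced forest's boundary into tree components together with a bookkeeping counter of already resolved vertices, and a standard representative-set argument for partitions, combined with the representative lemma, bounds the number of states by $n^{\OO(k)}$. The hardest step will be the representative lemma in its $\umim$ form, with a secondary difficulty being the FVS state management in the co-trace case where only $\umim(\overline{A}) \le k$ is known, since the partition invariant is naturally indexed from the $A$ side.
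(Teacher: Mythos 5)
Your treatment of \textsc{Independent Set} is essentially the paper's proof. The ``representative lemma'' you isolate is exactly the paper's key lemma (Lemma~\ref{lem:umim}), your explanation of why the exchange argument survives the passage from $\mim$ to $\umim$ (independence of $I$ on the $A$-side makes the private-neighbor matching induced in $G-E(\overline{A})$) is the right one, and the two-sided case distinction with trace-indexed versus co-trace-indexed representatives matches the paper's Lemma~\ref{lem:omim:IS} (the paper enumerates subsets $Y'\subseteq\overline{A}$ of size at most $k$ rather than co-traces, but by the same lemma these are interchangeable). Apart from ``maximum size'' where the weighted problem needs ``maximum weight,'' this half is correct.

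The \textsc{Feedback Vertex Set} half has a genuine gap. You defer to ``the representative-based DP framework developed for FVS on mim-width'' and ``a standard representative-set argument for partitions,'' but that framework does not transfer: it rests on the fact that the number of ($2$-)neighbor-equivalence classes of arbitrary subsets of $A$ is $n^{\OO(\mim(A))}$, whereas under $\umim$ this fails badly (take $A$ a clique in a split graph: $\umim(A)=1$ but subsets of $A$ realize exponentially many distinct traces). Lemma~\ref{lem:umim} only controls traces of unions of few independent sets, which suffices to record \emph{which} vertices of $\overline{A}$ see the partial forest, but not to record the \emph{connectivity pattern} of the boundary --- i.e., which crossing edges attach to which tree of $F[A]$ --- and without a bound on the number of boundary vertices the partition-valued part of your state is not $n^{\OO(k)}$. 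The missing ingredient is a new structural lemma (the paper's Lemma~\ref{lem:fvs:important}): for any induced forest $F$ and any cut with $\sw(A)\le k$, all edges of $F$ crossing the cut are covered by $\OO(k)$ ``important'' vertices, proved by a nontrivial two-coloring of these vertices into two induced $(A,\overline{A})$-matchings. Only with that bound can one index states by $\OO(k)$-sized boundary sets plus a partition of them (giving $n^{\OO(k)}\cdot k^{\OO(k)}$ states) and run the acyclicity-preserving equivalence argument. You also misplace the difficulty: the representative lemma is a short exchange argument, while this covering lemma is where the real work lies.
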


	We also show that o-mim-width is bounded on chordal graphs.
	In fact, we show a stronger result that o-mim-width of any graph is at most its \emph{tree-independence number} (\tin), which is a graph width parameter defined by Dallard, Milani{\v{c}}, and {\v{S}}torgel~\cite{dallard2022firstpaper} and independently by Yolov~\cite{DBLP:conf/soda/Yolov18}, and is known to be at most one on chordal graphs.

	\begin{restatable}{theorem}{omimtin}
		\label{thm:omimtin}
		Any graph with tree-independence number $k$ has o-mim-width at most $k$.
	\end{restatable}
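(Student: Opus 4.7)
The plan is to construct, from a tree decomposition $(T, \{B_t\}_{t \in V(T)})$ of $G$ with $\alpha(G[B_t]) \le k$ for every $t$, a branch decomposition of $G$ of $\omim$-width at most $k$. First I would root $T$ arbitrarily and, for each $v \in V(G)$, assign $v$ to the \emph{topmost} node $t_v$ of $T$ with $v \in B_{t_v}$; this is well-defined because the bags containing $v$ form a connected subtree of $T$, and it yields the key property that every bag containing $v$ lies in the subtree $T_{t_v}$ of $T$ rooted at $t_v$. Set $V_t := \{v : t_v = t\}$ and $V_{\mathrm{sub}}(t) := \bigcup_{s \in V(T_t)} V_s$. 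I would then build a cubic tree $T'$ bottom-up from $T$: at each node $t$ with children $c_1, \ldots, c_\ell$ and already-built sub-structures $S_{c_1}, \ldots, S_{c_\ell}$, arrange $V_t$ into a caterpillar $U_t$, arrange $\{S_{c_1}, \ldots, S_{c_\ell}\}$ into a caterpillar $C_t$, and join $U_t$ and $C_t$ at a single internal node bearing the edge up to $t$'s parent. Routine adjustments handle degenerate cases ($V_t = \emptyset$, $\ell = 0$, or $t$ being the root of $T$), ensuring that the leaves of $T'$ are in bijection with $V(G)$.

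Every edge $e$ of $T'$ induces a cut $(A_e, \overline{A_e})$ that, up to swapping sides, takes one of four shapes: (i) $A_e = V_{\mathrm{sub}}(c)$ for a node $c$ with a parent (main tree edge of $T$); (ii) $A_e \subseteq V_t$ (internal to $U_t$); (iii) $A_e = \bigcup_{i \in J} V_{\mathrm{sub}}(c_i)$ for some nonempty $J \subseteq \{1, \ldots, \ell\}$ (internal to $C_t$, or between $C_t$ and the join at $t$); (iv) $A_e = V_t$ (between $U_t$ and the join at $t$). For each type I would exhibit a single bag $B$ of $T$ that contains the ``independent side'' $\{x_r\}$ of any induced matching $(x_r, y_r)$ witnessing the relevant $\umim$ value; then $|\{x_r\}| \le \alpha(G[B]) \le k$. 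Types (ii) and (iv) are immediate since $A_e \subseteq B_t$, giving $\umim(A_e) \le k$. For type (i), the bag $B_c$ witnesses $\umim(\overline{A_e}) \le k$: every $y_r \in V_{\mathrm{sub}}(c)$ has all its bags in $T_c$, so any shared bag with $x_r \in \overline{A_e}$ (whose top $t_{x_r}$ lies outside $T_c$) must itself lie in $T_c$, forcing $x_r$'s bag-subtree to pass through $c$ and hence $x_r \in B_c$.

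The main obstacle is type (iii), where I aim to show that $B_t$ witnesses $\umim(\overline{A_e}) \le k$. In any matching edge $(x_r, y_r)$, $y_r \in V_{\mathrm{sub}}(c_i)$ for some $i \in J$, so $y_r$'s bags lie in $T_{c_i}$. The crux is ruling out $x_r \in V_{\mathrm{sub}}(c_j)$ for some $j \notin J$: in that case the bag-subtrees of $x_r$ and $y_r$ sit in the disjoint sibling subtrees $T_{c_j}$ and $T_{c_i}$ of $T$, so they share no bag, contradicting the edge $x_r y_r \in E(G)$. With this case eliminated, the remaining possibilities for $x_r$ are either $x_r \in V_t$ (so $x_r \in B_t$ directly) or $t_{x_r}$ being a strict ancestor of $t$, in which case $x_r$'s bag-subtree must reach from $t_{x_r}$ down into $T_{c_i}$ to share a bag with $y_r$, necessarily passing through $t$; either way $x_r \in B_t$. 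Thus $\{x_r\} \subseteq B_t$ is an independent set, giving $|\{x_r\}| \le \alpha(G[B_t]) \le k$, as desired.
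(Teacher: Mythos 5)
Your proof is correct and follows essentially the same route as the paper: both convert the tree decomposition into a branch decomposition of $V(G)$ in which every cut $(A,\overline{A})$ has either $A$ or $N(A)$ (equivalently, the independent endpoints of any witnessing matching) contained in a single bag, and then bound that independent set by the bag's independence number $k$. The only difference is cosmetic: the paper builds the branch decomposition via a nice tree decomposition and its forget-nodes, whereas you use the topmost-bag assignment with caterpillars, but the key separator property and the final counting argument are the same.
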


	We do not know if there is a polynomial-time algorithm to compute a branch decomposition of bounded o-mim-width if one exists, and the corresponding question is notoriously open also for both mim-width and sim-width.
	Because of this, it is also open whether \textsc{Independent Set} can be solved in polynomial-time on graphs of bounded mim-width, and more generally on graphs of bounded o-mim-width.

	In our second contribution we partially resolve the issue of not having algorithms for computing branch decompositions with bounded mim-width, o-mim-width, or sim-width.
	We introduce a graph parameter \emph{neighbor-depth}.
	The neighbor-depth $\nd(G)$ of a graph $G$ is defined recursively as follows.
	An empty graph has neighbor-depth $0$, and for a disconnected graph its neighbor-depth is the maximum neighbor-depth of its connected components.
	Then, for a connected non-empty graph $G$, its neighbor-depth is the smallest integer $k$ so that there exists a vertex $v \in V(G)$ so that $\nd(G \setminus N[v]) \le k-1$ and $\nd(G \setminus \{v\}) \le k$, where $N[v] = N(v) \cup \{v\}$ denotes the closed neighborhood of $v$.
	By induction, the neighbor-depth of all graphs is well-defined.
	We show that neighbor-depth can be computed in $n^{\OO(k)}$ time and also \textsc{Independent Set} can be solved in time $n^{\OO(k)}$ on graphs of neighbor-depth $k$.
	
	\begin{theorem}
		\label{thm:neighbordepthmain}
		Given a graph $G$ and an integer $k$, we can decide whether the neighbor-depth is at most $k$ and if so, solve \textsc{Independent Set} in time $n^{\OO(k)}$.
	\end{theorem}

	We show that graphs of bounded sim-width have neighbor-depth bounded by a polylogarithmic function on the number of vertices.
	
	\begin{restatable}{theorem}{simwidthnbdepth}
		\label{thm:simwidthnbdepth}
		Any $n$-vertex graph of sim-width $k$ has neighbor-depth $\OO(k \log^2 n)$.
	\end{restatable}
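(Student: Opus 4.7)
The plan is to proceed by strong induction on $n = |V(G)|$, proving $\nd(G) \le c k \log^2 n$ for a suitable constant $c > 0$, with trivial base cases for $n \le 1$. If $G$ is disconnected, apply the induction to each component; so assume $G$ is connected with $n \ge 2$. For the recursive definition, it suffices to find a vertex $v$ with $\nd(G \setminus N[v]) \le c k \log^2 n - 1$, because the other clause $\nd(G \setminus \{v\}) \le c k \log^2 n$ follows directly from the inductive hypothesis applied to $G \setminus \{v\}$, a graph on $n-1$ vertices inheriting sim-width $\le k$.

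The heart of the argument is a structural lemma: for any connected graph $G$ of sim-width $k$ on $n \ge 2$ vertices, there exists a ``recursive peeling schedule'' of depth $O(k \log n)$ such that iteratively applying the $\nd$ recursion --- at each step picking a vertex $v_i$ in its current connected component $H_i$ and recursing on the components of $H_i \setminus N_{H_i}[v_i]$ --- reduces every surviving connected piece to at most $n/2$ vertices. To construct the schedule, I would start from a branch decomposition of sim-width $k$, pick a balanced edge to obtain a vertex partition $(A, B)$ with $|A|, |B| \in [n/3, 2n/3]$ and $\sw(A) \le k$, and exploit the induced-matching bound across the cut to identify a short list of vertices whose iterated closed-neighborhood removals progressively dismantle the balance into pieces of size at most $n/2$.

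Given this lemma, the induction closes: after the $O(k \log n)$ levels of the schedule, $G$ is reduced to a collection of connected subgraphs on at most $n/2$ vertices, each having $\nd \le c k \log^2(n/2) = c k \log^2 n - 2 c k \log n + c k$ by the inductive hypothesis; adding the $O(k \log n)$ depth accumulated during the schedule yields the desired $c k \log^2 n$ bound once $c$ is chosen large enough. The main technical obstacle is the structural lemma itself: sim-width does not imply the existence of small balanced vertex separators, as witnessed by $K_{n,n}$ which has sim-width $1$ but no balanced vertex separator of size $o(n)$. The argument must therefore exploit the crucial fact that removing a closed neighborhood $N[v]$ can eliminate a large substructure in a single step (for instance, in $K_{n,n}$ a single vertex $v$ leaves only $n - 1$ isolated vertices behind), translating the induced-matching constraint on the balanced partition into a balanced decomposition of \emph{recursion depth} rather than of vertex-cut.
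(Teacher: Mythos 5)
Your high-level architecture is the same as the paper's: obtain a balanced cut $(A,\overline{A})$ with $\sw(A)\le k$ from the branch decomposition (Lemma~\ref{lem:balsep}), spend $\OO(k\log n)$ pivot vertices to disconnect the graph along that cut, recurse on components of size at most a constant fraction of $n$, and multiply the two logarithms. But the proposal stops exactly where the real work begins: you name the ``structural lemma'' as the main technical obstacle and describe what it should accomplish, without proving it. The missing quantitative fact is the paper's Lemma~\ref{lem:control}: if $\sw(A)\le k$, then some single vertex $v$ \emph{neighbor-controls} at least a $1/(2k)$ fraction of the edges of $E(A,\overline{A})$ (i.e., at least that many crossing edges disappear in $G\setminus N[v]$). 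The proof is short but essential: take a maximum induced $(A,\overline{A})$-matching $M$, which has at most $k$ edges and hence at most $2k$ endpoints; by maximality, every crossing edge has an endpoint in $N[V(M)]$, so every crossing edge is neighbor-controlled by some vertex of $V(M)$, and by pigeonhole one of these at most $2k$ vertices controls at least $|E(A,\overline{A})|/(2k)$ of them. This is what makes the crossing-edge count decay geometrically (by a factor $1-1/(2k)$ per pivot), so that $\OO(k\log(n^2))=\OO(k\log n)$ pivots empty the cut and split the graph into parts of size at most $2n/3$. Without this, your schedule has no proven depth bound; note that your intuition about $K_{n,n}$ is correct but is not by itself an argument for general graphs of sim-width $k$.

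A second, smaller gap is the bookkeeping needed to ``persistently target the same cut.'' After removing $N[v]$ the graph may disconnect into several components, and each component must still be charged against the same cut with the reduced crossing-edge count, while the $G\setminus\{v\}$ branch of the recursion must also be controlled (it does not shrink the cut at all, so one cannot simply induct on $n$ there with the same bound unless the bound is phrased so that deleting a vertex never increases it). The paper resolves both issues by inducting on a triple $(n,k,t)$, where $t$ bounds the number of crossing edges of the witnessing cut, observing that all three conditions are inherited by induced subgraphs, and proving the explicit bound $1+4k(\log_{3/2}(n)\cdot\log(n^2+1)+\log(t+1))$, in which one pivot costs one unit of depth but buys a decrease of at least one in the term $4k\log(t+1)$. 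Your proposal gestures at a ``peeling schedule'' but does not set up an invariant that makes this accounting close. Both gaps are fixable and your plan is the right one, but as written the central lemma and the induction measure are absent.
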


	Theorems~\ref{thm:neighbordepthmain} and~\ref{thm:simwidthnbdepth} combined show that \textsc{Independent Set} can be solved in time $n^{\OO(k \log^2 n)}$ on graphs of sim-width $k$, which in particular is quasipolynomial time for fixed $k$.
	This resolves, under the mild assumption that $\NP \not\subseteq \QP$, the question of Kang, Kwon, Str{\o}mme, and Telle, who asked if \textsc{Independent Set} is NP-complete on graphs of bounded sim-width~\cite[Question~2]{DBLP:journals/tcs/KangKST17}.

	Neighbor-depth characterizes branching algorithms for \textsc{Independent Set} in the following sense.
	We say that an independent set branching tree of a graph $G$ is a binary tree whose every node is labeled with an induced subgraph of $G$, so that (1) the root is labeled with $G$, (2) every leaf is labeled with the empty graph, and (3) if a non-leaf node is labeled with a graph $G[X]$, then either (a) its children are labeled with the graphs $G[L]$ and $G[R]$ where $(L,R)$ is a partition of $X$ with no edges between $L$ and $R$, or (b) its children are labeled with the graphs $G[X \setminus N[v]]$ and $G[X \setminus \{v\}]$ for some vertex $v \in X$.
	Note that such a tree corresponds naturally to a branching approach for \textsc{Independent Set}, where we branch on a single vertex and solve connected components independently of each other.
	Let $\beta(G)$ denote the smallest number of nodes in an independent set branching tree of a graph $G$.
	Neighbor-depth gives both lower- and upper-bounds for $\beta(G)$.
	
	\begin{theorem}
		\label{thm:nbdepthbranchingtree}
		For all graphs $G$, it holds that $2^{\nd(G)} \le \beta(G) \le n^{\OO(\nd(G))}$.
	\end{theorem}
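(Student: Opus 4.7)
The plan is to prove the two bounds by separate inductive arguments. For the upper bound, I construct a branching tree directly from the recursion in Definition~\ref{def:nd}: emit a leaf for the empty graph, peel off components with case~(a) when $G$ is disconnected, and branch on the witness vertex $v$ of $\nd(G) \le k$ when $G$ is connected and non-empty. This yields the recurrence $\beta(G) \le 1 + \beta(G \setminus N[v]) + \beta(G \setminus \{v\})$ in the connected case (with $\nd(G \setminus N[v]) \le k - 1$ and $\nd(G \setminus \{v\}) \le k$) and $\beta(G) \le (c - 1) + \sum_i \beta(C_i)$ in the disconnected case. A Pascal-style induction on $n + k$, using the elementary bound $\sum n_i^k \le n^k$ when $\sum n_i = n$ to absorb the disconnected case, yields $\beta(G) \le n^{\OO(k)}$.

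For the lower bound, I first establish the monotonicity of $\nd$ under induced subgraphs: $\nd(H) \le \nd(G)$ whenever $H$ is induced in $G$. This is proved by induction on $|V(G)|$, case-splitting along the three cases of the definition. The delicate part is when $G$ is connected with witness vertex $v$ and $H$ is a disconnected subgraph containing $v$: each component $C_j$ of $H$ is handled either by applying the $\nd$-recursion using $v$ (if $v \in C_j$, relying on the fact that $C_j \setminus N[v]$ and $C_j \setminus \{v\}$ are induced in the strictly smaller graphs $G \setminus N[v]$ and $G \setminus \{v\}$) or by iterated one-vertex monotonicity (if $v \notin C_j$). From monotonicity I derive the key lemma $\nd(G \setminus \{v\}) \ge \nd(G) - 1$ for every $v \in V(G)$: otherwise $\nd(G \setminus \{v\}) \le \nd(G) - 2$ and monotonicity gives $\nd(G \setminus N[v]) \le \nd(G) - 2$ as well, so applying the $\nd$-recursion on the component of $G$ containing $v$ bounds $\nd(G)$ by $\nd(G) - 1$, a contradiction.

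The main lower bound $\beta(G) \ge 2^{\nd(G)}$ then follows by strong induction on $|V(G)|$. For disconnected $G$, a branching tree of $G$ projects (after contracting trivial branches caused by case (b) on a vertex outside a component, or by case (a) splits with an empty side after restriction) onto a branching tree of each component $C$, so $\beta(G) \ge \beta(C) \ge 2^{\nd(C)}$ by the induction hypothesis, and taking the maximum over components yields the bound. For connected non-empty $G$, any branching tree of $G$ must branch on some vertex $v$ at the root, giving $\beta(G) = 1 + \beta(G \setminus N[v]) + \beta(G \setminus \{v\})$. Combining the key lemma with the $\nd$-recursion inequality $\nd(G) \le \max(\nd(G \setminus N[v]) + 1, \nd(G \setminus \{v\}))$ yields a dichotomy: if $\nd(G \setminus \{v\}) \ge \nd(G)$ then the right subtree alone already has at least $2^{\nd(G)}$ nodes by IH; otherwise $\nd(G \setminus N[v]) \ge \nd(G) - 1$ and the key lemma forces $\nd(G \setminus \{v\}) \ge \nd(G) - 1$, so that both subtrees contribute at least $2^{\nd(G) - 1}$ nodes each, summing to strictly more than $2^{\nd(G)}$.

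I expect the monotonicity of $\nd$ to be the main technical hurdle: the recursion is existential in its choice of witness vertex, so finding a suitable witness in an arbitrary induced subgraph requires the case analysis described above. Without monotonicity, the lower bound induction breaks precisely in the ``unbalanced'' branch where $\nd(G \setminus \{v\})$ could a priori drop far below $\nd(G)$, leaving the right subtree too small to contribute; monotonicity rules this out, and the remainder of the argument is comparatively routine.
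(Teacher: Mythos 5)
Your proposal is correct and follows essentially the same route as the paper: the upper bound unrolls the recursion of Definition~\ref{def:nd} into a branching tree and counts it with a Pascal-type recurrence (the paper gets $n^{\OO(k)}$ from the algorithm of Lemma~\ref{lem:isneighdepth}), and the lower bound is an induction on $|V(G)|$ resting on monotonicity of $\nd$ under induced subgraphs (the paper's Lemma~\ref{lem:indsub}). The one place where you genuinely diverge is the branching-node case of the lower bound, and there your treatment is actually the more careful one. The paper asserts outright that $\nd(G \setminus N[v]) \ge \nd(G)-1$ for \emph{every} branching vertex $v$; taken literally this is false --- if $v$ dominates a connected graph $G$ with $\nd(G)\ge 2$, then $G\setminus N[v]$ is empty while $\nd(G)-1\ge 1$. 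Your dichotomy (either $\nd(G\setminus\{v\})\ge\nd(G)$, in which case the right subtree alone supplies $2^{\nd(G)}$ nodes, or else the recursion inequality forces $\nd(G\setminus N[v])\ge\nd(G)-1$ and monotonicity/your key lemma forces $\nd(G\setminus\{v\})\ge\nd(G)-1$, so the two subtrees together supply $2^{\nd(G)}$) is exactly the repair needed, and the rest of your argument --- the projection of a branching tree onto a component in the disconnected case, and the key lemma $\nd(G\setminus\{v\})\ge\nd(G)-1$ --- is sound.
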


	By observing that some known algorithms for \textsc{Independent Set} in fact construct independent set branching trees implicitly, we obtain upper bounds for neighbor-depth on some graph classes purely by combining the running times of such algorithms with Theorem~\ref{thm:nbdepthbranchingtree}.
	In particular, for an integer $k$, we say that a graph is $C_{>k}$-free if it does not contain induced cycles of length more than $k$.
	Gartland, Lokshtanov, Pilipczuk, Pilipczuk and Rzazewski~\cite{DBLP:conf/stoc/GartlandLPPR21} showed that \textsc{Independent Set} can be solved in time $n^{\OO(\log^3 n)}$ on $C_{>k}$-free graphs for any fixed $k$, generalizing a result of Gartland and Lokshtanov on $P_k$-free graphs~\cite{DBLP:conf/focs/GartlandL20}.
	By observing that their algorithm is a branching algorithm that (implicitly) constructs an independent set branching tree, it follows from Theorem~\ref{thm:nbdepthbranchingtree} that the neighbor-depth of $C_{>k}$-free graphs is bounded by a polylogarithmic function on the number of vertices.
	
	\begin{proposition}
		\label{pro:ndckfree}
		For every fixed integer $k$, $C_{>k}$-free graphs with $n$ vertices have neighbor-depth at most $\OO(\log^4 n)$.
	\end{proposition}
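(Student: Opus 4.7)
The plan is to derive the bound entirely from the quasipolynomial-time branching algorithm of Gartland, Lokshtanov, Pilipczuk, Pilipczuk and Rzazewski~\cite{DBLP:conf/stoc/GartlandLPPR21}, combined with the lower-bound half of Theorem~\ref{thm:nbdepthbranchingtree}. So the proof reduces to two observations: that their recursion implicitly builds an independent set branching tree in the precise sense of the definition above, and that this tree has at most $n^{\OO(\log^3 n)}$ nodes.

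First, I would revisit the recursive structure of the GLPPR algorithm on $C_{>k}$-free graphs. A recursive call receives an induced subgraph $G[X]$; it first decomposes $G[X]$ into its connected components and recurses on each separately, and on a connected instance it selects a pivot vertex $v \in X$ and recurses on the two subproblems $G[X \setminus N[v]]$ (corresponding to $v$ belonging to the optimal independent set) and $G[X \setminus \{v\}]$ (corresponding to $v$ being discarded). These two operations are exactly the two operations allowed at a non-leaf node of an independent set branching tree, so the recursion tree $T(G)$ of the algorithm on input $G$ is a valid independent set branching tree in the sense of Theorem~\ref{thm:nbdepthbranchingtree}.

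Second, I would invoke the running-time analysis of GLPPR, which shows that on an $n$-vertex $C_{>k}$-free graph the algorithm performs $n^{\OO(\log^3 n)}$ recursive calls, so $\beta(G) \le |T(G)| \le n^{\OO(\log^3 n)}$. Applying $2^{\nd(G)} \le \beta(G)$ from Theorem~\ref{thm:nbdepthbranchingtree} and taking base-$2$ logarithms then yields
\[
\nd(G) \;\le\; \log_2 \beta(G) \;\le\; \OO(\log^3 n) \cdot \log_2 n \;=\; \OO(\log^4 n),
\]
as claimed.

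The only nontrivial step is the first one: verifying that the GLPPR recursion genuinely fits the narrow template of "split off connected components" plus "branch on include-or-exclude a single pivot vertex". Any auxiliary operation used inside the algorithm that falls outside this template (for example, enumerating independent subsets of a small neighborhood, branching on several pivots simultaneously, or applying preprocessing reductions) must be re-expressed as an $n^{\OO(1)}$-sized block of elementary include/exclude branchings on single vertices. Because the target bound $|T(G)| \le n^{\OO(\log^3 n)}$ already absorbs polynomial factors, any such polynomial inflation is harmless and does not affect the final conclusion $\nd(G) = \OO(\log^4 n)$.
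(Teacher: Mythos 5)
Your proposal is correct and follows exactly the paper's argument: the paper likewise observes that the GLPPR algorithm implicitly constructs an independent set branching tree of size $n^{\OO(\log^3 n)}$ and then applies the lower bound $2^{\nd(G)} \le \beta(G)$ from Theorem~\ref{thm:nbdepthbranchingtree} to conclude $\nd(G) = \OO(\log^4 n)$. Your explicit attention to verifying that the recursion fits the branching-tree template is a reasonable elaboration of the paper's brief "by observing that their algorithm is a branching algorithm" remark.
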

	
	Along the same lines as Proposition~\ref{pro:ndckfree}, a polylogarithmic upper bound for neighbor-depth could be also given for graphs with bounded induced cycle packing number, using the quasipolynomial algorithm of Bonamy, Bonnet, D{\'{e}}pr{\'{e}}s, Esperet, Geniet, Hilaire, Thomass{\'{e}}, and Wesolek~\cite{DBLP:conf/soda/BonamyBDEGHTW23}.

	\begin{figure}[t!]
		\label{fig:classes}
		\includegraphics[width=\textwidth]{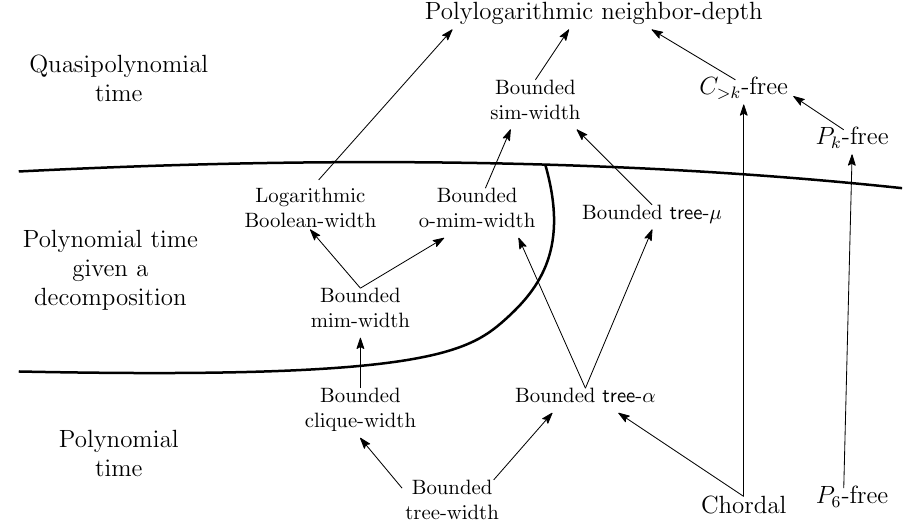}
		\caption{Hierarchy of some graph classes with polylogarithmically bounded neighbor-depth, divided vertically on whether the best known algorithm for \textsc{Independent Set} on the class is polynomial time, polynomial time given a decomposition (and quasipolynomial without a decomposition), or quasipolynomial time. 
		}
	\end{figure}

	In Figure~\ref{fig:classes} we show a hierarchy of graph classes discussed in this paper, and the known algorithmic results for \textsc{Independent Set} on those classes.
	All of the inclusions shown are proper, and all of the inclusions between these classes are included in the figure.
	Some of the inclusions are proven in Sections~\ref{subsec:relomimwidth} and~\ref{subsec:ndsimwidth}, and some of the non-inclusions in Section~\ref{sec:paramrel}.
	Note that bounded Boolean-width is equivalent to bounded clique-width~\cite{vatshelle:thesis}.
	The polynomial-time algorithm for \textsc{Independent Set} on $P_6$-free graphs is from~\cite{DBLP:journals/talg/GrzesikKPP22}, the definition of $\tmm$ and polynomial-time algorithm for \textsc{Independent Set} on graphs of bounded $\tmm$ is from~\cite{DBLP:conf/soda/Yolov18}, and the definition of Boolean-width and a polynomial-time algorithm for \textsc{Independent Set} on graphs of logarithmic Boolean-width is from~\cite{DBLP:journals/tcs/Bui-XuanTV11}.
	The inclusion of logarithmic Boolean-width in polylogarithmic neighbor-depth follows from \Cref{thm:simwidthnbdepth}  and the fact the sim-width of a graph is at most its Boolean-width.
	Polynomial-time algorithm for \textsc{Independent Set} on graphs of bounded clique-width follows from~\cite{DBLP:journals/mst/CourcelleMR00,DBLP:journals/jct/OumS06}.

	\section{Preliminaries}
	\label{sec:prel}
	The size of a set $V$ is denoted by $|V|$ and its power set is denoted by $2^V$. 
	We let  $\max(\emptyset):=-\infty$.
	
	\paragraph{Graphs.}
	Our graph terminology is standard and we refer to \cite{Diestel12}.  The set of vertices of a graph $G$ is denoted by $V(G)$ and the set of edges by $E(G)$.
	For a vertex subset $X\subseteq V(G)$, when the underlying graph $G$ is clear from context, we denote by $\comp{X}$ the set $V(G)\setminus X$.
	An edge between two vertices $x$ and $y$ is denoted by $xy$ or $yx$. 
	The set of vertices that are adjacent to $x$ is denoted by $N_G(x)$. For a set $U\subseteq V(G)$, we define  $N_G(U):=\bigcup_{x\in U}N_G(x) \setminus U$.
	The closed neighborhood of a vertex $x$ is denoted by $N_G[x] = N_G(x) \cup \{x\}$ and the closed neighborhood of a vertex set $U$ by $N_G[U] = N_G(U) \cup U$.
	If the underlying graph is clear, then we may remove $G$ from the subscript.
	
	The subgraph of $G$ induced by a subset $X$ of its vertex set is denoted by $G[X]$.
	We also use the notation $G \setminus X = G[V(G) \setminus X]$.
	For two disjoint subsets of vertices $X$ and $Y$ of $V(G)$, we denote by $G[X,Y]$ the bipartite graph with vertex set $X\cup Y$ and edge set $\{xy \in E(G)\mid x\in X \text{ and } \ y\in Y \}$.
	Given two disjoint set of vertices $X,Y$, we denote by $E(X)$ the set of edges of $G[X]$ and by $E(X,Y)$ the set of edges of $G[X,Y]$. 
	For a set of edges $E'$ of $G$, we denote by $G-E'$ the graph with vertex set $V(G)$ and edge set $E(G)\setminus E'$.
	
	An \emph{independent set} is a set of vertices that induces an edgeless graph.
	Given a graph $G$ with a weight function $w : V(G)\to \bZ_{\ge 0}$, 
	the problem \textsc{Independent Set} asks for an independent set of maximum weight, where the weight of a set $X\subseteq V(G)$ is $\sum_{x\in X} w(x)$.
	A \emph{feedback vertex set} is the complement of a set of vertices inducing a forest (i.e. acyclic graph).
	The problem \textsc{Feedback Vertex Set} asks for a feedback vertex set of minimum weight.
	
	A \emph{matching} in a graph $G$ is a set $M \subseteq E(G)$ of edges having no common endpoint and an \emph{induced matching} is a matching where the subgraph of $G$ induced by the endpoints of the matching does not contain any other edges than the edges of the matching.
	Given two disjoint subsets $A,B$ of $V(G)$, we say that a matching $M$ is a $(A,B)$-matching if every edge of $M$ has one endpoint in $A$ and the other in $B$.

	\paragraph{Width parameters.}
	
	We refer to the introduction for the definitions of branch-decomposition and $\f$-width, we recall below the definitions of mim-width, sim-width and o-mim-width.
	\begin{itemize}
		\item The maximum induced matching-width (mim-width)~\cite{vatshelle:thesis} of a graph $G$ is the $\mim$-width  where $\mim(A)$ is the size of a maximum induced matching of the graph $G[A,\comp{A}]$.

		\item The special induced matching-width (sim-width)~\cite{DBLP:journals/tcs/KangKST17} of a graph $G$ is the $\sw$-width  where $\sw(A)$ is the size of maximum induced $(A,\comp{A})$-matching in the graph $G$.
		
		\item Given a graph $G$ and $A\subseteq V(G)$, the \textit{upper-mim-width} $\umim(A)$ of $A$ is the size of maximum induced $(A,\comp{A})$-matching in the graph $G-E(\comp{A})$.
		The one-sided-mim-width (o-mim-width) of $G$ is the $\omim$-width where $\omim(A)\defeq \min( \umim(A), \umim(\comp{A}))$.
	\end{itemize}
	The following is a standard lemma that $\f$-width at most $k$ implies balanced cuts with $\f$-width at most $k$. 
	
	\begin{lemma}
		\label{lem:balsep}
		Let $G$ be a graph, $X \subseteq V(G)$ a set of vertices with $|X| \ge 2$, and $\f : 2^{V(G)} \rightarrow \bZ_{\ge 0}$ a symmetric set function.
		If the $\f$-width of $G$ is at most $k$, then there exists a bipartition $(A,\overline{A})$ of $V(G)$ with $\f(A) \le k$, $|X \cap A| \le \frac{2}{3}|X|$, and $|X \cap \overline{A}| \le \frac{2}{3}|X|$.
	\end{lemma}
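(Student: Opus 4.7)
I would prove the lemma by a standard orientation argument on a branch decomposition of $\f$. Take any branch decomposition $(T,\delta)$ of $\f$ with width at most $k$. Every edge $e$ of $T$ induces a bipartition $(A_e,\overline{A}_e)$ of $V(G)$ with $\f(A_e)\le k$, so the only non-trivial task is to locate an edge where both sides see at most $\tfrac{2}{3}|X|$ elements of $X$.

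The plan is to argue by contradiction: assume that for every edge $e$ of $T$, one of the two sides of $(A_e,\overline{A}_e)$ contains more than $\tfrac{2}{3}|X|$ elements of $X$. Since the two sides have sizes summing to $|X|$, exactly one side (the ``heavy'' side) has this property. I orient each edge of $T$ towards its heavy side; since $T$ is a tree, this orientation is acyclic, and therefore has at least one sink $v$.

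Next I rule out the case that $v$ is a leaf. If $v$ were a leaf, then under the unique edge incident to $v$, the side containing $v$ corresponds via $\delta$ to a single element of $V(G)$, so it contains at most one element of $X$; this is at most $1\le \tfrac{2}{3}|X|$ because $|X|\ge 2$, so this side cannot be the heavy side, and the edge must be oriented away from $v$, contradicting the choice of $v$ as a sink. Hence $v$ is an internal vertex of the cubic tree $T$, of degree $3$. Letting $L_1,L_2,L_3$ be the three leaf-sets of the subtrees of $T$ hanging off $v$ (which partition $V(G)$ via $\delta$), the three edges at $v$ each point towards $v$, so for every $i\in\{1,2,3\}$ the heavy side of the $i$-th edge is $L_j\cup L_k$ with $\{i,j,k\}=\{1,2,3\}$; equivalently, $|X\cap L_i|<\tfrac{1}{3}|X|$. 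Summing over $i$ yields $|X|<|X|$, the desired contradiction.

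There is no real obstacle here: the only delicate point is excluding leaves as sinks, which uses the hypothesis $|X|\ge 2$. Once that is done, the degree-$3$ sink immediately produces the needed $\tfrac{1}{3}$--$\tfrac{2}{3}$ split, and the $\f$-width bound on the resulting edge follows for free from the branch decomposition.
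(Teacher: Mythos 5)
Your proof is correct and establishes the same statement by the standard balanced-edge argument on the decomposition tree; the paper's version roots the tree at a subdivided edge and greedily descends while the current subtree is heavy, whereas you phrase the identical combinatorial content as an orientation-towards-the-heavy-side plus sink argument, using $|X|\ge 2$ to exclude leaf sinks exactly where the paper uses it to guarantee the descent halts. The two presentations are interchangeable, so no further comment is needed.
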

	\begin{proof}
		Let $(T, \delta)$ be a branch decomposition of $\f$ of width $k$, and let us subdivide some edge of $T$ and consider $T$ be rooted on this subdivision node $r$.
		Now, for a node $x$ of $T$, denote by $V_x$ the vertices of $G$ that are mapped to the leafs of $T$ that are descendants of $x$.
		We walk from the root $r$ as follows.
		We start by setting $t$ as the root node, and then while $|V_t \cap X| \ge \frac{2}{3}|X|$, we move $t$ to a child $c$ of $t$ with $|V_c \cap X| \ge \frac{1}{3}|X|$ (note that such a child $c$ exists because the tree is binary).
		This walk must end up in a node $t$ with $\frac{1}{3} |X| \le |V_t \cap X| \le \frac{2}{3}|X|$, giving the desired bipartition $(A,\overline{A}) = (V_t, \overline{V_t})$.
	\end{proof}

	A tree decomposition of a graph $G$ is a pair $(T, \bag)$, where $T$ is a tree and $\bag : V(T) \rightarrow 2^{V(G)}$ is a function from the nodes of $T$ to subsets of vertices of $G$ called \emph{bags}, satisfying that (1) for every edge $uv \in E(G)$ there exists a node $t \in V(T)$ so that $\{u,v\} \subseteq \bag(t)$, and (2) for every vertex $v \in V(G)$, the set of nodes $\{t \in V(T) \colon v \in \bag(t)\}$ induces a non-empty and connected subtree of $T$.
	The width of a tree decomposition is the maximum size of $\bag(t)$ minus one, and the treewidth of a graph is the minimum width of a tree decomposition of the graph.

	For a set of vertices $X \subseteq V(G)$, we denote by $\alpha(X)$ the maximum size of an independent set in $X$.
	The independence number of a tree decomposition $(T,\bag)$ is
	the maximum of $\alpha(\bag(t))$ over $t \in V(T)$ and it is denoted by $\alpha(T,\bag)$.
	The tree-independence number of a graph ($\tin$) is the minimum independence number of a tree decomposition of the graph~\cite{dallard2022firstpaper,DBLP:conf/soda/Yolov18}.
	
	For a set of vertices $X \subseteq V(G)$, we denote by $\mu(X)$ the maximum size of an induced matching in $G$ so that for each edge of the matching, at least one of the endpoints of the edge is in $X$.
	For a tree decomposition $(T,\bag)$, we denote by $\mu(T,\bag)$ the maximum of $\mu(\bag(t))$ over $t \in V(T)$.
	Yolov~\cite{DBLP:conf/soda/Yolov18} defined the minor-matching hypertree width ($\tmm$) of a graph to be the minimum $\mu(T,\bag)$ of a tree decomposition $(T,\bag)$ of $G$.

	\section{O-mim-width}
	\label{sec:omim}
	
	In this section, we prove \Cref{thm:omim:algo,thm:omimtin}.
	We start with some intermediary results.
	The following reveals an important property of cuts of bounded upper-mim-width. Razgon proved a similar statement in~\cite{Razgon21}. 
	To simplify the statements of this section, we fix an $n$-vertex graph $G$ with a weight function $w : V(G) \to \bZ_{\ge 0}$.
	
	\begin{lemma}\label{lem:umim}
		Let $A\subseteq V(G)$. For every $X\subseteq A$ that is the union of $t$ independent sets, there exists $X'\subseteq X$ of size at most $t\cdot \umim(A)$ such that $N(X)\setminus A = N(X')\setminus A$. In particular, we have $\abs{\{N(X)\setminus A \mid X\in \mathsf{IS}(A) \} } \le n^{\umim(A)}$ where $\mathsf{IS}(A)$ is the set of independent sets of $G[A]$.
	\end{lemma}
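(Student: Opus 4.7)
The plan is to first reduce to the case $t = 1$: if $X = I_1 \cup \cdots \cup I_t$ with each $I_j$ independent, then producing for each $I_j$ a subset $I_j' \subseteq I_j$ of size at most $\umim(A)$ with $N(I_j') \setminus A = N(I_j) \setminus A$ suffices, since $X' = \bigcup_j I_j'$ then has size at most $t \cdot \umim(A)$ and $N(X') \setminus A = \bigcup_j N(I_j') \setminus A = N(X) \setminus A$.

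So fix an independent set $I \subseteq A$ and choose $I' \subseteq I$ minimal (with respect to inclusion) subject to $N(I') \setminus A = N(I) \setminus A$. I claim $|I'| \le \umim(A)$, which I will prove by exhibiting an induced matching of size $|I'|$ in $G - E(\overline{A})$ between $A$ and $\overline{A}$. By minimality of $I'$, for every $v \in I'$ there is some $y_v \in N(v) \setminus A$ that does not belong to $N(I' \setminus \{v\}) \setminus A$, i.e.\ $y_v$ is adjacent to $v$ and to no other vertex of $I'$. In particular the vertices $y_v$ are pairwise distinct (if $y_v = y_{v'}$ then $y_v$ would be adjacent to two vertices of $I'$). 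Now set $M = \{v y_v : v \in I'\}$.

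The remaining step is to verify that $M$ is an induced matching in $G - E(\overline{A})$, and here is where the definition of $\umim$ does exactly the work we need. The vertex set of $M$ is $I' \cup \{y_v : v \in I'\}$. Among edges of $G - E(\overline{A})$, the possible extra chords are either inside $I' \subseteq A$ or between $I'$ and $\{y_v\}$, since edges inside $\{y_v\} \subseteq \overline{A}$ have been deleted in $G - E(\overline{A})$. The first kind does not exist because $I' \subseteq I$ is independent; the second kind does not exist by our choice of each $y_v$. Hence $M$ is an induced $(A,\overline{A})$-matching in $G - E(\overline{A})$, giving $|I'| = |M| \le \umim(A)$, as desired. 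This finishes the main claim.

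For the ``in particular'' part, take any $X \in \mathsf{IS}(A)$; applying the $t = 1$ case yields $X' \subseteq X \subseteq V(G)$ of size at most $\umim(A)$ with $N(X) \setminus A = N(X') \setminus A$. Thus $\{N(X) \setminus A : X \in \mathsf{IS}(A)\}$ is determined by the choice of such an $X'$, and the number of subsets of $V(G)$ of size at most $\umim(A)$ is bounded by $n^{\umim(A)}$. The only subtle point in the whole argument is checking that $M$ is genuinely induced in $G - E(\overline{A})$; the role of the weaker definition $\umim$ (deleting edges inside $\overline{A}$) is precisely to make the potential chords among the $y_v$'s irrelevant, which is what allows us to avoid any assumption on the structure of $\overline{A}$.
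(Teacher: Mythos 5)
Your proof is correct and follows essentially the same route as the paper's: reduce to $t=1$, then argue that a subset of the independent set in which every vertex has a private neighbor in $\overline{A}$ yields an induced $(A,\overline{A})$-matching in $G-E(\overline{A})$ (you phrase this via an inclusion-minimal $I'$, the paper via iteratively deleting redundant vertices, which is the same argument). No gaps.
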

	\begin{proof}
		It is sufficient to prove the lemma for $t=1$, since if $X$ is the union of $t$ independent sets $X_1,\dots,X_t$, then the case $t=1$ implies that, for each $i\in[1,t]$, there exits $X_i'\subseteq X_i$ such that
		$N(X_i)\setminus A = N(X'_i)\setminus A$ and $\abs{X_i'}\leq \umim(A)$.
		It follows that $X'=X_1'\cup \dots \cup X_t' \subseteq X$, $N(X)\setminus A = N(X')\setminus A$ and $\abs{X'}\leq t\cdot \umim(A)$.
		
		Let $X$ be an independent set of $G[A]$.
		If for every vertex $x\in X$, there exists a vertex $y_x\in \comp{A}$ such that $N(y_x)\cap X= \{x\}$, then $\{xy_x \mid x\in X\}$ is an induced $(A,\comp{A})$-matching in $G - E(\comp{A})$.
		We deduce that either $\abs{X}\le \umim(A)$ or there exists a vertex $x\in X$ such that $N(X)\setminus A = N(X\setminus \{x\}) \setminus A$.
		Thus, we can recursively remove vertices from $X$ to find a set $X'\subseteq X$ of size at most $\umim(A)$ and such that $N(X)\setminus A = N(X')\setminus A$.
		In particular, the latter implies that $\{N(X)\setminus A \mid X\in \mathsf{IS}(A) \} = \{N(X)\setminus A \mid X\in \mathsf{IS}(A) \land \abs{X}\le \umim(A) \}$. 
		We conclude that $\abs{\{N(X)\setminus A \mid X\in \mathsf{IS}(A) \}}\le n^{\umim(A)}$.
	\end{proof}
	
	To solve \textsc{Independent Set} and \textsc{Feedback Vertex Set}, we use the general toolkit developed in \cite{bergougnoux2023logic} with a simplified notation adapted to our two problems.
	This general toolkit is based on the following notion of representativity between sets of partial solutions. 
	In the following, the collection $\cS$ represents the set of solutions, in our setting $\cS$ consists of either all the independent sets or all the set of vertices inducing a forest.
	
	\begin{definition}\label{def:represents}
		Given $\cS\subseteq 2^{V(G)}$, for every $\cA\subseteq 2^{V(G)}$ and $Y\subseteq V(G)$, we define 
		$ 		 \best_{\cS}(\cA,Y) \defeq \max\{w(X) \mid X\in \cA \land X\cup Y \in \cS\}. $		
		Given $A\subseteq V(G)$ and $\cA,\cB\subseteq 2^{A}$, we say that $\cB$ $(\cS,A)$-represents $\cA$ if for every $Y\subseteq \comp{A}$, we have $\best_{\cS}(\cA,Y)=\best_{\cS}(\cB,Y)$.
	\end{definition}
	
	Observe that if there is no $X\in \cB$ such that $X\cup Y\in\cS$, then $\best_{\cS}(\cB,Y)=\max(\emptyset)=-\infty$.
	It is easy to see that the relation ``$(\cS,A)$-represents'' is an equivalence relation.
	
	The following is simplification of Theorem~4.5 from \cite{bergougnoux2023logic}.
	It proves that a routine for computing small representative sets can be used to design a dynamic programming algorithm.

	\begin{theorem}[\cite{bergougnoux2023logic}]\label{thm:reduce}
		Let $\cS\subseteq 2^{V(G)}$. Assume that there exists a constant $c$ and an algorithm that, given $A\subseteq V(G)$ and $\cA\subseteq 2^{A}$, computes in time $\abs{\cA}n^{\OO(\omim(A))}$ a subset $\cB$ of $\cA$ such that $\abs{\cB} \leq n^{c \cdot \omim(A)}$ and $\cB$ $(\cS,A)$-represents $\cA$.
		Then, there exists an algorithm, that given a layout $\cL$ of $G$, computes in time $n^{\OO(\omim(\cL))}$ a set of size at most $n^{c \cdot \omim(A)}$ that contains an element in $\cS$ of maximum weight.  
	\end{theorem}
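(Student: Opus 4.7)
The plan is a standard bottom-up dynamic programming on the branch decomposition $\cL$, using the hypothesised reduction routine at every internal node to keep the maintained collection of partial solutions small. Subdivide an arbitrary edge of the tree to root $\cL$ at a new node $r$, and for each node $t$ write $V_t \subseteq V(G)$ for the set of vertices mapped to the leaves of the subtree rooted at $t$; by construction $\omim(V_t) \le \omim(\cL)$ for every non-root node. The invariant I will maintain, processing nodes from the leaves upward, is that we have computed $\cA_t \subseteq 2^{V_t}$ with $|\cA_t| \le n^{c \cdot \omim(V_t)}$ that $(\cS, V_t)$-represents $2^{V_t}$. At the root, $V_r = V(G)$ and $\overline{V_r} = \emptyset$, so taking $Y = \emptyset$ in Definition~\ref{def:represents} tells us that $\cA_r$ contains some $X^\star \in \cS$ with $w(X^\star) = \max\{w(X) : X \in \cS\}$, which is exactly the required output.

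For the base case, each leaf corresponds to a single vertex $v$, and $\cA_\ell \defeq \{\emptyset, \{v\}\}$ trivially $(\cS, \{v\})$-represents $2^{\{v\}}$. For an internal node $t$ with children $t_1, t_2$, write $V_t = V_{t_1} \cup V_{t_2}$ (disjoint union) and form the Cartesian union $\cA_t' \defeq \{X_1 \cup X_2 : X_1 \in \cA_{t_1},\; X_2 \in \cA_{t_2}\}$, whose size is at most $n^{2c \cdot \omim(\cL)}$. Then invoke the hypothesised reducer on $\cA_t'$ at the cut $V_t$ to produce $\cA_t \subseteq \cA_t'$ of size at most $n^{c \cdot \omim(V_t)}$ that $(\cS, V_t)$-represents $\cA_t'$. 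Since ``$(\cS, V_t)$-represents'' is transitive, it suffices to argue that $\cA_t'$ already $(\cS, V_t)$-represents $2^{V_t}$, after which the invariant for $\cA_t$ follows.

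The only non-routine step is this compositionality claim, which I will prove by a two-swap argument. Given arbitrary $Y \subseteq \overline{V_t}$ and $X \subseteq V_t$ with $X \cup Y \in \cS$, split $X = X_1 \cup X_2$ with $X_i \subseteq V_{t_i}$. Viewing $X_2 \cup Y$ as a subset of $\overline{V_{t_1}}$ and applying $(\cS, V_{t_1})$-representation of $2^{V_{t_1}}$ by $\cA_{t_1}$ yields $X_1' \in \cA_{t_1}$ with $w(X_1') \ge w(X_1)$ and $X_1' \cup X_2 \cup Y \in \cS$; then, viewing $X_1' \cup Y$ as a subset of $\overline{V_{t_2}}$ and applying $(\cS, V_{t_2})$-representation by $\cA_{t_2}$ yields $X_2' \in \cA_{t_2}$ with $w(X_2') \ge w(X_2)$ and $X_1' \cup X_2' \cup Y \in \cS$. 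Hence $X_1' \cup X_2' \in \cA_t'$ witnesses $\best_\cS(\cA_t', Y) \ge w(X)$; the converse inequality is immediate from $\cA_t' \subseteq 2^{V_t}$. Finally, the per-node work is dominated by the reducer call, which costs $|\cA_t'| \cdot n^{\OO(\omim(V_t))} = n^{\OO(\omim(\cL))}$, and summing over the $\OO(n)$ nodes of $\cL$ gives total running time $n^{\OO(\omim(\cL))}$. The main point to verify carefully is the two-swap compositionality above; everything else is routine branch-decomposition bookkeeping.
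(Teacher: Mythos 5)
The paper does not prove this theorem at all: it is imported as a black box (a "simplification of Theorem~4.5" of the cited work \cite{bergougnoux2023logic}), so there is no in-paper proof to compare against. Your proposal supplies the standard argument that underlies that result, and it is correct: the two-swap compositionality step is sound (the first swap uses $X_2\cup Y\subseteq \comp{V_{t_1}}$, the second uses $X_1'\cup Y\subseteq \comp{V_{t_2}}$, and additivity of $w$ over the disjoint parts gives $w(X_1'\cup X_2')\ge w(X)$), the invariant and running-time bookkeeping are right, and the evaluation at the root with $Y=\emptyset$ extracts a maximum-weight element of $\cS$. The only cosmetic point is the size bound on the final output set, which follows by applying the reducer once more at the root (or simply returning the maximum-weight element), matching the theorem's (somewhat loosely stated) conclusion.
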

	
	In the rest of this section, we prove that routines for computing small representative sets exist for \textsc{Independent Set} and \textsc{Feedback Vertex Set}. To simplify the following statements, we fix a subset $A\subseteq V(G)$.
	
	\subsection{Independent Set}
	
	\def\IS{\mathcal{I}}
	
	The following lemma provides a routine to compute small representative sets for \textsc{Independent Set}.
	We denote by $\IS$ the set of all independent sets of $G$.
	
	\begin{lemma}\label{lem:omim:IS}
		Let $k=\omim(A)$.
		Given a collection $\cA\subseteq 2^{A}$, we can compute in time $\abs{\cA} n^{\OO(k)}$ a subset $\cB$ of $\cA$ such that $\cB$ $(\IS,A)$-represents $\cA$ and $\abs{\cB}\le n^{k}$.
	\end{lemma}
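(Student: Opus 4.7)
The plan is to exploit the symmetric definition $\omim(A) = \min(\umim(A), \umim(\comp{A}))$: since at least one of $\umim(A), \umim(\comp{A})$ is at most $k$, I will give two constructions of a $(\IS,A)$-representative $\cB\subseteq\cA$, each valid and of size at most $n^k$ under the respective hypothesis. The algorithm runs both constructions; the first one is always valid, so we output it whenever its size is at most $n^k$, and otherwise we must have $\umim(\comp{A})\le k$ and output the second.

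\textbf{Case $\umim(A)\le k$.} Two subsets $X_1,X_2\subseteq A$ that are independent in $G[A]$ are interchangeable for \textsc{Independent Set} whenever $N(X_1)\cap\comp{A}=N(X_2)\cap\comp{A}$, because independence of $X_i\cup Y$ depends only on independence of $Y$ and on whether $Y$ meets $N(X_i)\cap\comp{A}$. So I filter $\cA$ to its elements that are independent in $G[A]$, partition them by the signature $\sigma(X)=N(X)\cap\comp{A}$, and keep a maximum-weight element of each class. The ``in particular'' clause of Lemma~\ref{lem:umim} applied on side $A$ bounds the number of distinct signatures by $n^{\umim(A)}\le n^k$; this construction is always valid, regardless of $\umim(A)$, giving the claim in this case.

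\textbf{Case $\umim(\comp{A})\le k$.} Here the few equivalence classes live on the opposite side: by Lemma~\ref{lem:umim} applied on side $\comp{A}$, every independent $Y\subseteq\comp{A}$ admits $Y_0\subseteq Y$ with $|Y_0|\le k$ and $N(Y_0)\cap A=N(Y)\cap A$. I enumerate all candidates $Y_0\subseteq\comp{A}$ of size at most $k$ (at most $n^k$ of them), compute $F\defeq N(Y_0)\cap A$, and add to $\cB$ the element $X^{*}_F\in\cA$ of maximum weight that is independent in $G[A]$ and disjoint from $F$. Correctness: given any independent $Y$ and a maximizer $X$ attaining $\best_{\IS}(\cA,Y)$, pick $Y_0$ as above; then $X\cap N(Y_0)=X\cap N(Y)\cap A=\emptyset$, so $X$ is a candidate for $X^{*}_F$, whence $w(X^{*}_F)\ge w(X)$, and $X^{*}_F\cup Y\in\IS$ because $X^{*}_F\subseteq A$ and $X^{*}_F\cap N(Y)\cap A=X^{*}_F\cap F=\emptyset$. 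The size bound $|\cB|\le n^k$ is immediate from the enumeration.

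The total running time is $|\cA|\cdot n^{\OO(k)}$: the first construction takes $|\cA|\cdot n^{\OO(1)}$ time, while the second enumerates $n^{\OO(k)}$ candidates $Y_0$ and scans $\cA$ in linear time for each. The conceptual point is that o-mim-width allows us to pick the favorable side of the cut: on the $\umim(A)$-bounded side we apply the classical signature-based argument, and on the $\umim(\comp{A})$-bounded side we use the dual argument of grouping $\cA$-elements by the forbidden set $F\subseteq A$. The only delicate point, and the main step to verify, is that Case 2 indeed produces a valid representation; this reduces to Lemma~\ref{lem:umim} symmetrized on side $\comp{A}$ as outlined above.
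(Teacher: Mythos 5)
Your proof is correct and follows essentially the same route as the paper's: keep one maximum-weight representative per neighborhood signature $N(X)\setminus A$ when $\umim(A)\le k$, and otherwise one per small subset of $\comp{A}$, with Lemma~\ref{lem:umim} supplying both the $n^k$ bound on signatures and the reduction of an arbitrary independent $Y$ to a $Y_0$ of size at most $k$. The only (harmless) deviation is algorithmic: the paper first computes which of $\umim(A),\umim(\comp{A})$ equals $k$ by brute force over $k+1$ edges, whereas you always build the signature-based set and fall back to the second construction only when it is too large, which is an equally valid way to select the favorable side.
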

	\begin{proof}
		Let $\cA\subseteq 2^{A}$.
		We compute $\cB$ from the empty set as follows:
		\begin{itemize}
			\item If $\umim(A)=k$, then, for every  $Y\in \{N(X)\setminus A \mid X $ is an independent in $\cA\}$, we add to $\cB$ an independent set $X\in \cA$ of maximum weight such that $Y=N(X)\setminus A$.
			
			\item If $\umim(A)> k$, then, for each subset $Y\subseteq \comp{A}$ with $\abs{Y}\le k$, we add to $\cB$ a set $X\in \cA$ of maximum weight such that $X\cup Y$ is an independent set (if such $X$ exists).
		\end{itemize}
		
		\paragraph{Correctness.} 
		First, we prove that $\abs{\cB}\le n^{k}$.
		This is straightforward when $\umim(A)> k$.
		When $\umim(A)=k$, \Cref{lem:umim} implies that $\abs{\{N(X)\setminus A \mid X $ is an independent in $\cA\}}\le n^{k}$ and thus, we have $\abs{\cB}\le n^{k}$.

		Next, we prove that $\cB$ $(\IS,A)$-represents $\cA$, i.e. for every $Y\subseteq \comp{A}$, we have $\best_\IS(\cA,Y)=\best_\IS(\cB,Y)$.
		Let $Y\subseteq \comp{A}$. 
		As $\cB$ is subset of $\cA$, we have $\best_\IS(\cB,Y) \le \best_\IS(\cA,Y)$.
		In particular, if there is no $X\in\cA$ such that $X\cup Y$ is an independent set, then we have $\best_\IS(\cA,Y)=\best_\IS(\cB,Y)=-\infty$.
		
		Suppose from now that $\best_\IS(\cA,Y)\neq -\infty$ and let $X\in \cA$ such that $X\cup Y$ is an independent set and $w(X)=\best_\IS(\cA,Y)$.
		We distinguish the following cases:
		\begin{itemize}
			\item If $\umim(A)=k$, then, by construction, there exists an independent set $W\in \cB$ such that $N(X)\setminus A = N(W) \setminus A$ and $w(X)\le w(W)$.
			As $X\cup Y$ is an independent set, we deduce that $N(X)\cap Y=N(W)\cap Y = \emptyset$ and thus $W\cup Y$ is an independent set.
			
			\item If $\umim(A)> k$, then $\umim(\comp{A})=k$ as $\omim(A)=\min(\umim(A),\umim(\comp{A}))=k$.
			By \Cref{lem:umim}, there exists an independent set $Y'\subseteq Y $ of size at most $k$ such that $N(Y)\setminus \comp{A} = N(Y')\setminus \comp{A}$.
			As $Y'\subseteq Y$, we know that $X\cup Y'$ is an independent set. Thus, by construction there exists a set $W\in \cB$ such that $W\cup Y'$ is an independent set and $w(X)\le w(W)$.
			Since $N(Y)\setminus A = N(Y')\setminus A$, we deduce that $W\cup Y$ is an independent set.
		\end{itemize}
		In both cases, there exists $W\in \cB$ such that $W\cup Y$ is an independent set and $w(X)\le w(W) \le \best_\IS(\cB,Y)$.
		Since $\best_\IS(\cB,Y)\le \best_\IS(\cA,Y) = w(X)$, it follows that $w(X)=\best_\IS(\cA,Y)=\best_\IS(\cB,Y)$.
		As this holds for every $Y\subseteq \comp{A}$, we conclude that $\cB$ $(\IS,A)$-represents $\cA$.
		
		\paragraph{Running time.}
		Computing $\omim(A)=k$ and checking whether $\umim(A) = k$ can be done by looking at every set of $k+1$ edges and check whether one of these sets is an induced $(A,\comp{A})$-matching in $G - E(\comp{A})$ and in $G-E(A)$.
		This can be done in time $\OO(\binom{n^{2}}{k+1}n^{2})=n^{\OO(k)}$ time.
		When $\umim(A)> k$, it is clear that computing $\cB$ can be done in time $\abs{\cA} n^{\OO(k)}$.
		This is also possible when $\umim(A)=k$ as \Cref{lem:umim} implies that $\abs{\{N(X)\setminus A \mid X$ is an independent set in $\cA\}}\le n^{k}$.
	\end{proof}
	
	We obtain the following by using \Cref{thm:reduce} with the routine from \Cref{lem:omim:IS}.
	
	\begin{theorem}\label{thm:omim:IS}
		Given an $n$-vertex graph with a branch decomposition of o-mim-width $k$, we can solve \textsc{Independent Set} in time $n^{\OO(k)}$.
	\end{theorem}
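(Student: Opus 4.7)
The plan is to deduce Theorem~\ref{thm:omim:IS} as a direct consequence of the dynamic programming meta-theorem (Theorem~\ref{thm:reduce}) applied to the representative-set routine provided by Lemma~\ref{lem:omim:IS}. Concretely, I instantiate the collection $\cS$ of Theorem~\ref{thm:reduce} as $\IS$, the family of all independent sets of $G$; note that with this choice, $\best_{\IS}(\{X\},\emptyset) = w(X)$ whenever $X \in \IS$, so returning an independent set of maximum weight from a small representative family at the root of the branch decomposition solves \textsc{Independent Set}.

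First I would verify that Lemma~\ref{lem:omim:IS} matches the hypothesis of Theorem~\ref{thm:reduce}: given any $A \subseteq V(G)$ and any $\cA \subseteq 2^A$, the lemma computes in time $|\cA| \cdot n^{\OO(\omim(A))}$ a subset $\cB \subseteq \cA$ of size at most $n^{\omim(A)}$ that $(\IS,A)$-represents $\cA$. This is exactly the form required by Theorem~\ref{thm:reduce} with the absolute constant $c=1$.

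Next, I apply Theorem~\ref{thm:reduce} to the given branch decomposition $\cL$ of o-mim-width $k$. The theorem produces, in time $n^{\OO(\omim(\cL))} = n^{\OO(k)}$, a family $\cF$ of size at most $n^{\OO(k)}$ that is guaranteed to contain an element of $\IS$ of maximum weight. To finish, I scan $\cF$ and return the element $X \in \cF$ maximising $w(X)$ subject to $X \in \IS$; checking independence and computing weights takes polynomial time per element, hence $n^{\OO(k)}$ overall.

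There is essentially no combinatorial obstacle here: all the work was done in Lemma~\ref{lem:omim:IS}, whose proof crucially used Lemma~\ref{lem:umim} on both sides of the cut to handle the asymmetric definition $\omim(A) = \min(\umim(A), \umim(\comp{A}))$. The only thing to be mindful of is that the representative-family routine must be provided with $\omim(A)$, but this can be computed within the claimed budget as already argued in the proof of Lemma~\ref{lem:omim:IS} (enumeration of $(k+1)$-edge sets). Putting the two ingredients together yields the stated $n^{\OO(k)}$ running time for \textsc{Independent Set}.
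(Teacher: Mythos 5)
Your proposal is correct and follows exactly the paper's route: the theorem is obtained by plugging the representative-set routine of Lemma~\ref{lem:omim:IS} (with $\cS=\IS$ and $c=1$) into the meta-theorem, Theorem~\ref{thm:reduce}. The additional details you spell out (extracting the maximum-weight independent set from the output family, and computing $\omim(A)$ within budget) are consistent with what the paper leaves implicit.
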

	
	\subsection{Feedback Vertex Set}
	
	\def\FVS{\mathsf{\cF}}
	
	As usual, instead of looking for a minimum feedback vertex set, we look for an induced forest (the complement of a feedback vertex set) of maximum weight.
	We denote by $\FVS$ the collection of all the sets $X\subseteq V(G)$ that induces a forest.
	
	We start by proving that the edges of an induced forest crossing a cut of o-mim-width $k$ can be covered by at most $4k$ vertices.
	In fact, we prove a stronger result by using sim-width which is always smaller than o-mim-width.
	To prove this property, we need the following notion of important vertices.
	\begin{definition}\label{def:fvs:Aimportant}
		Let $F$ be an induced forest of $G$. We call $x\in V(F)\cap A$ (1)~an $A$-internal vertex when $x$ has at least two neighbors in $V(F)\cap \comp{A}$, (2)~$A$-pendant when $x$ has only one neighbor $y$ in $V(F)\cap \comp{A}$ and $x$ is the only vertex from $V(F)\cap A$ adjacent to $y$ and  (3)~$A$-important when $x$ is $A$-internal or $A$-pendant.
	\end{definition}
	Observe that by definition, every edge of an induced forest $F$ between $A$ and $\comp{A}$ is incident to an $A$-important vertex or an $\comp{A}$-important vertex of $F$. 
	The following lemma provides an upper-bound on the number of $A$-important vertices of an induced forest.
	
	\begin{lemma}\label{lem:fvs:important}
		For every induced forest $F$ of $G$, the number of $A$-important vertices of $F$ is at most $2\sw(A)$.
	\end{lemma}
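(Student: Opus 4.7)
The plan is to produce an induced $(A,\comp{A})$-matching in $G$ whose size is at least $|I|/2$, where $I$ denotes the set of $A$-important vertices of $F$; this immediately gives $|I|\le 2\sw(A)$. Since $F = G[V(F)]$ is an induced subgraph of $G$, any induced matching inside $F$ is automatically induced in $G$, so the construction can be carried out entirely inside $F$. I would set up an auxiliary bipartite graph $B$ with left part $I$, right part $V(F)\cap\comp{A}$, and edge set equal to the $F$-edges between these two parts. Being a subgraph of the forest $F$, $B$ is itself a bipartite forest, and Definition~\ref{def:fvs:Aimportant} yields two structural facts used throughout: every $A$-pendant $v\in I$, together with its partner, forms an isolated edge component of $B$, whereas every $A$-internal vertex has $B$-degree at least $2$.

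The first intermediate goal is a matching $M_B$ in $B$ that saturates $I$, so $|M_B|=|I|$. Pendant components saturate trivially. For the remaining components I would verify Hall's condition directly: for any non-empty $S\subseteq I$ lying in such a component, every vertex of $S$ is $A$-internal, hence $\sum_{v\in S}\deg_B(v)\ge 2|S|$; this sum counts exactly the edges of the subforest $B[S\cup N_B(S)]$, and the forest inequality $|E|\le |V|-1$ therefore forces $|N_B(S)|\ge |S|+1$. Hall's theorem then delivers $M_B$.

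Although $M_B$ is a matching in $F$, it need not be induced, so the final step extracts an induced submatching of size at least $|M_B|/2$. The key observation is that contracting a matching inside a forest cannot produce a cycle, a loop, or a multi-edge, so $F/M_B$ is again a simple forest. Consequently the \emph{conflict graph} of $M_B$---whose vertex set is $M_B$ and whose edges are the pairs $e,e'$ with some $F$-edge joining $V(e)$ to $V(e')$---appears as a vertex-induced subgraph of $F/M_B$ (restricted to the supernodes), and is therefore itself a forest, in particular bipartite. Taking the larger class of a $2$-colouring gives $M'\subseteq M_B$ of size at least $|I|/2$ forming an induced matching in $F$, hence in $G$, with every edge crossing $(A,\comp{A})$; thus $\sw(A)\ge|M'|\ge|I|/2$.

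The most delicate step is the final contraction/colouring argument: one must check carefully that contracting a matching in a forest yields a simple forest (any would-be multi-edge $uw,vw$ together with the contracted $uv$ would be a triangle, and any new cycle would require an alternate $u$-to-$v$ path) and then correctly identify the conflict graph as a subgraph of $F/M_B$. Once this bookkeeping is in place, the rest of the argument reduces to the two standard tools, Hall's theorem and two-colouring a forest.
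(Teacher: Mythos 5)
Your proof is correct, and it takes a genuinely different route from the paper's. The paper constructs the partners $y_x$ by hand (the unique $\overline{A}$-neighbor for pendant vertices, a neighbor off the path to a BFS root for internal ones), defines a parent relation along these paths, places each important vertex in the color class avoiding its parent, and then verifies through a fairly delicate case analysis that each class is an induced $(A,\overline{A})$-matching. You instead split the factor of $2$ into two clean, modular steps: Hall's theorem on the bipartite forest $B$ (where the pendant components are isolated edges and the internal vertices have degree at least $2$, so the forest edge bound $|E|\le|V|-1$ gives the expansion $|N_B(S)|\ge|S|+1$) yields a matching saturating all important vertices, and then the observation that the conflict graph of a matching in a forest is itself a forest (being an induced subgraph of the minor $F/M_B$) lets you two-color it and keep the larger class as an induced matching. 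Both arguments are tight up to the same factor of $2$; yours trades the paper's explicit but error-prone case analysis for two standard tools, which makes it easier to verify, while the paper's version produces the two induced matchings directly as a partition of all the important vertices rather than discarding half of a saturating matching. One small point worth making explicit when you write this up: Hall's condition must be checked for arbitrary $S\subseteq I$, which you get by decomposing $S$ over the connected components of $B$ (whose neighborhoods are disjoint), since your degree argument only applies within a component free of pendant vertices.
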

	\begin{proof}
		Let $X$ be the set of $A$-important vertices of an induced forest $F$ of $G$.
		We construct a bipartition $(X_0,X_1)$ of $X$ and we associate each vertex $x\in X$ with a vertex $y_x\in V(F)\cap \comp{A}$ such that the sets $\{x y_x \mid x\in X_0\}$ and $\{x y_x \mid x\in X_1\}$ are induced $(A,\comp{A})$-matchings of $G$.
		This is sufficient to prove that $\abs{X}\le 2 \sw(A)$.
		
		We construct $(X_0,X_1)$ by doing the following on each connected component $C$ of $F$.
		We fix a vertex $v_C\in V(C)\cap X$ that we add to $X_0$, then we do a breadth-first traversal of $C$ from $v_C$.
		When we visit a vertex $x\in X \cap V(C)$, we consider the path $P_x$ between $x$ and $v_C$ in $C$.
		If $x$ is $A$-pendant, we consider $y_x$ to be its unique neighbor in $V(F)\cap \comp{A}$.
		Otherwise $x$ is $A$-internal and it admits at least two neighbors in $V(F)\cap \comp{A}$, thus at least one of them---that we consider as $y_x$---does not lie in $P_x$.
		
		We define the parent of $x$ as follows:
		If (1)~$x$ is $A$-pendant, (2)~$y_x$ lies in $P_x$ and (3)~$y_x$ is adjacent to a vertex $y_w$ in $P_x$ associated with an $A$-pendant vertex $w$, we define the parent of $x$  as $w$. 
		Otherwise, we define the parent of $x$ as the $A$-important vertex that is the closest to $x$ in $P_x$.
		We add $x$ to the set among $(X_0,X_1)$ which does not contain its parent, this is well-defined because the parent of $x$ is always closer to $v_C$ than $x$ and thus $x$ is visited after its parent.
		
		Assume towards a contradiction that $\{x y_x \mid x\in X_0\}$ is not an induced $(A,\comp{A})$-matching of $G$ (the proof is symmetrical for $\{x y_x \mid x\in X_1\}$).
		Thus, there exist two distinct $A$-important vertices $x_1,x_2\in X_0$ such that $\{x_1y_{x_1},x_2y_{x_2}\}$ is not an induced matching.
		This implies that the vertices $x_1,x_2,y_{x_1},y_{x_2}$ induce a connected graph and they belong to the same connected component $C$ of $F$.
		Since $C$ is a tree, we deduce that there exists $i\in\{1,2\}$ such that $x_{3-i}$ or $y_{x_i}$ lies in the path $P_{x_i}$ between $x_i$ and $v_C$.
		Without loss of generality, assume that $x_1$ or $y_{x_2}$ lies in $P_{x_2}$.
		We prove that either $x_1$ is the parent of $x_2$ or $x_2$ is the parent of $x_1$, in both cases, this yields a contradiction since both $x_1$ and $x_2$ belong to $X_0$.
		\begin{itemize}
			\item Suppose first that $x_1$ lies in $P_{x_2}$. Since $x_1,x_2,y_{x_1},y_{x_2}$ induce a connected graph, we deduce that $x_1$ must be the parent of $x_2$, yielding a contradiction.
			
			\item Assume now that $y_{x_2}$ lies in $P_{x_2}$ and $x_1$ does not lie in $P_{x_2}$. 
			The choice of $y_{x_2}$ implies that $x_2$ is $A$-pendant: $y_{x_2}$ is the unique neighbor of $x_2$ in $V(F)\cap \comp{A}$ and $y_{x_2}$ is the unique neighbor of $y_{x_2}$ in $V(F)\cap A$.
			In particular, it implies that $x_1$ is not adjacent to $y_{x_2}$ and $x_2$ is not adjacent to $y_{x_1}$.
			As $\{x_1y_{x_1},x_2y_{x_2}\}$ is not an induced matching, either $x_1$ is adjacent to $x_2$ or $y_{x_1}$ is adjacent to $y_{x_2}$.
			If $x_1$ is adjacent to $x_2$, then $x_2$ is the neighbor of $x_1$ in $P_{x_1}$ and $x_2$ must be the parent of $x_1$, this yields a contradiction.
			
			Suppose that $y_{x_1}$ and $y_{x_2}$ are adjacent.
			As $x_1$ does not lie in $P_{x_2}$, $y_{x_1}$ must lie in $P_{x_1}$.
			Thus, $x_1$ is also a $A$-pendant vertex.
			It follows that either $y_{x_1}$ belongs in $P_{x_{2}}$ or $y_{x_2}$ belongs to $P_{x_1}$.
			Without loss of generality, assume that $y_{x_1}$ belongs in $P_{x_{2}}$.
			By definition of the parent of $x_2$, we deduce that $x_1$ is the parent of $x_2$ because the last three vertices of $P_{x_2}$ are $y_{x_1}, y_{x_2}$ and $x_2$, yielding a contradiction.
		\end{itemize}
		Hence, $\{x y_x \mid x\in X_0\}$ and by symmetry $\{x y_x \mid x\in X_1\}$ are induced $(A,\comp{A})$-matchings of $G$.
		We conclude that $\abs{X}\leq 2\sw(A)$.
	\end{proof}
	
	Our routine for computing small representative sets for \textsc{Feedback Vertex Set} is based on the following set of triples.
	\def\bX{\mathbb{X}}
	\def\bY{\mathbb{Y}}
	\def\bW{\mathbb{W}}

	\begin{definition}
		We define $\cT$ as the set of all triples $(\bX,\bY,\bW)$ such that:
		\begin{itemize}
			\item $\bX$ is a subset of $A$, $\bY$ is a subset of $\comp{A}$ and 
			
			\item If $\umim(A)=k$, then $\bW$ is a subset of $A$, otherwise $\bW$ is a subset of $\comp{A}$.
			
			\item The sizes of $\bX,\bY$ and $\bW$ are at most $2\omim(A)$.
		\end{itemize}
	\end{definition}
	The last item guarantees that $\cT$ contains at most $n^{6\omim(A)}$ triples.
	
	To compute small representative sets, we define a notion of compatibility between the triples in $\cT$ and the subsets of $A$ and $\comp{A}$ and we define an equivalence relation $\sim_t$ between the subsets of $A$ compatible with a triple $t$  such that:
	\begin{itemize}
		\item For every induced forest $F$, there exists a triple in $\cT$ compatible with $V(F)\cap A$ and $V(F)\cap \comp{A}$.
		\item For every $X,W\subseteq A$ and $Y\subseteq \comp{A}$ compatible with $t\in \cT$, if $X\sim_t W$, then $X\cup Y$ induces a forest iff $W\cup Y$ induces a forest.
	\end{itemize}
	Given $A\subseteq V(G)$ and $\cA\subseteq 2^{A}$, we compute a small representative set $\cB$ of $\cA$ from the empty set by adding, for each triple $t$ of $\cT$ and equivalence class $\cC$ of $\sim_t$, a set $X\in \cA\cap \cC$ of maximum weight.
	The above-mentioned properties guarantee that $\cB$ $(\FVS,A)$-represents $\cA$.
	
	Based on \Cref{lem:umim,lem:fvs:important}, our notion of compatibility guarantee that, for every $X\subseteq A$ and $Y\subseteq \comp{A}$ compatible with $t=(\bX,\bY,\bW)$,  the set $\bX\cup \bY$ is a vertex cover of $G[X,Y]$, i.e. every edge between $X$ and $Y$ has at least one endpoint in $\bX\cup \bY$.
	Moreover, two subsets of $A$ are equivalent for $\sim_t$ if they connect the vertices of $\bX\cup \bY$ in the same way.
	The number of equivalence classes of $\sim_t$ is at most $(4k)^{4k}$ with $k=\omim(A)$ since $\abs{\bX\cup \bY} \leq 4k$.
	As $\abs{\cT} \leq n^{6k}$, the size of the computed representative set $\cB$ is at most $n^{6k} (4k)^{4k}$.
	
	We start by defining our notion of compatibility.
	
	\begin{definition}
		We say that a set $X\subseteq A$ is compatible with $(\bX,\bY,\bW)\in \cT$ if:
		\begin{enumerate}
			\item\label{item:A:forest} $\bX\subseteq X$ and the graph $G[X\cup \bY]$ is a forest.
			\item\label{item:A:W} If $\bW \subseteq A$, then $N(X\setminus \bX) \cap \comp{A} = N(\bW)\cap \comp{A}$, otherwise, $N(\bW)\cap (X\setminus \bX) =\emptyset$.
		\end{enumerate}
		Moreover, we say that a set $Y\subseteq \comp{A}$ is compatible with $(\bX,\bY,\bW)\in \cT$ if:
		\begin{enumerate}[A.]
			\item\label{item:compA:forest} $\bY\subseteq Y$ and the graph $G[\bX\cup Y]$ is a forest.
			\item\label{item:compA:W} If $\bW \subseteq \comp{A}$, then $N(Y\setminus \bY) \cap A = N(\bW)\cap  A$, otherwise $N(\bW)\cap (Y\setminus \bY) =\emptyset$.
		\end{enumerate}
	\end{definition}
	
	The following lemma proves the most important property of our notion of compatibility.
	
	\begin{lemma}\label{lem:fvs:triple}
		Let $X\subseteq A$ and $Y\subseteq \comp{A}$. If $G[X\cup Y]$ induces a forest, then there exists a triple in $\cT$ compatible with $X$ and $Y$.
	\end{lemma}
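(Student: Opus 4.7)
The plan is to build the required triple $(\bX,\bY,\bW) \in \cT$ explicitly from the induced forest $F = G[X \cup Y]$. First, I take $\bX$ to be the set of $A$-important vertices of $F$ and $\bY$ to be the set of $\comp{A}$-important vertices of $F$. Applying Lemma~\ref{lem:fvs:important} once to $A$ and once to $\comp{A}$, and using that $\sw$ is symmetric with $\sw(A) \le \min(\umim(A), \umim(\comp{A})) = \omim(A) = k$ (since any induced $(A,\comp{A})$-matching in $G$ is induced both in $G-E(\comp{A})$ and in $G-E(A)$), I get $|\bX|, |\bY| \le 2k$.

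To construct $\bW$, I invoke Lemma~\ref{lem:umim}. By symmetry, I may assume $\umim(A) = k$, in which case the triple requires $\bW \subseteq A$; the case $\umim(A) > k$ (so $\umim(\comp{A}) = k$) is handled identically by working on the $\comp{A}$ side. The key observation is that $X \setminus \bX \subseteq V(F) \cap A$, so $G[X \setminus \bX]$ is a forest, hence bipartite, hence the union of two independent sets of $G$. Lemma~\ref{lem:umim} with $t = 2$ then produces $\bW \subseteq X \setminus \bX \subseteq A$ of size at most $2 \umim(A) = 2k$ satisfying $N(X \setminus \bX) \cap \comp{A} = N(\bW) \cap \comp{A}$. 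Hence $(\bX,\bY,\bW) \in \cT$.

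It remains to verify the four compatibility conditions. Items~\ref{item:A:forest} and~\ref{item:compA:forest} are immediate: $\bX \subseteq X$ and $\bY \subseteq Y$ hold by definition, and $G[X \cup \bY]$ and $G[\bX \cup Y]$ are induced subgraphs of the forest $F$, hence forests themselves. The first clause of Item~\ref{item:A:W} is exactly the neighborhood equality provided by Lemma~\ref{lem:umim}. The main obstacle is Item~\ref{item:compA:W}, which demands $N(\bW) \cap (Y \setminus \bY) = \emptyset$, and this is where the specific choice $\bW \subseteq X \setminus \bX$ is essential. I will fix $w \in \bW$ and analyze its neighbors in $Y$: since $w$ is $A$-non-important in $F$, either $w$ has no neighbor in $Y$ at all, or its unique neighbor $y \in Y$ is adjacent to another vertex of $X$; in the latter case $y$ has at least two neighbors in $V(F) \cap A$, so $y$ is $\comp{A}$-internal, therefore $\comp{A}$-important, and thus $y \in \bY$. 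Either way $N(w) \cap Y \subseteq \bY$, which yields the required emptiness and completes the verification.
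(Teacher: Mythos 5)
Your proposal is correct and follows essentially the same route as the paper: the same choice of $\bX$ and $\bY$ as the important vertices bounded via Lemma~\ref{lem:fvs:important}, the same construction of $\bW$ via Lemma~\ref{lem:umim} with $t=2$ using bipartiteness of the forest $G[X\setminus\bX]$, and the same verification of the compatibility conditions. The only cosmetic difference is that you re-derive inline, by a case analysis on a non-important vertex $w$, the observation (stated in the paper right after Definition~\ref{def:fvs:Aimportant}) that every crossing edge of $F$ is incident to an $A$-important or $\comp{A}$-important vertex.
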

	\begin{proof}
		Assume that $F=G[X\cup Y]$ is a forest.
		We construct a triple $(\bX,\bY,\bW)\in \cT$ as follows.
		We set $\bX$ as the set of all $A$-important vertices of $F$ and $\bY$ as the set of all  $\comp{A}$-important vertices of $F$.
		By \Cref{lem:fvs:important}, the sizes of $\bX$ and $\bY$ are at most $2k$.
		We define $\bW$ as follows:
		\begin{itemize}
			\item If $\umim(A)=k$, we consider $\bW$ as a subset of $X\setminus \bX$ of size at most $2k$ such that $N(\bW)\setminus A = N(X\setminus \bX)\setminus A$.
			\item Otherwise, $\umim(\comp{A})=k$ and we consider $\bW$ as a subset of $Y\setminus \bY$ of size at most $2k$ such that $N(\bW)\cap A = N(Y\setminus \bY)\setminus A$.
		\end{itemize}
		The existence of $\bW$ is guaranteed by \Cref{lem:umim} and the fact that $X\setminus \bX$ and $Y\setminus \bY$ can be partitioned into two independent sets since both sets induce forests.
		
		Since the sizes of $\bX,\bY$ and $\bW$ are at most $2k$, we have $(\bX,\bY,\bW)\in \cT$.
		Since $\bX\subseteq X$, $\bY\subseteq Y$ and $X\cup Y$ induced a forest, we deduce that Property~\ref{item:A:forest} and~\ref{item:compA:forest} are satisfied.
		It remains to prove Properties~\ref{item:A:W} and~\ref{item:compA:W}.
		Since these properties are symmetric, we assume without loss of generality that $\umim(A)=k$.
		Thus, we have $N(\bW)\cap \comp{A} = N(X\setminus \bX) \cap \comp{A}$ and Property~\ref{item:A:W} is satisfied.
		By definition, every edge between $X$ and $Y$ has at least one endpoint which is $A$-important or $\comp{A}$-important in $F$.
		It follows that $N(X\setminus \bX)\cap (Y\setminus \bY)=\emptyset$.
		We deduce that $N(\bW)\cap (Y\setminus \bY)=\emptyset$ and thus Property~\ref{item:compA:W} is satisfied.
		Hence, $X$ and $Y$ are compatible with $(\bX,\bY,\bW)\in \cT$.
	\end{proof}

	We associate every triple of $\cT$ with the equivalence relation defined below.
	
	\begin{definition}
		For every $X,W\subseteq V(G)$ compatible with a triple $t=(\bX,\bY,\bW)$, we say that $X$ and $W$ are $t$-equivalent if for every $u,v\in \bX\cup \bY$, $u$ and $v$ are connected in $G[X\cup \bY]$ iff $u$ and $v$ are connected in $G[W\cup \bY]$.
	\end{definition}
	
	The following lemma proves that two $t$-equivalent partial solutions give a forest with the same subsets of $\comp{A}$ compatible with $t$.
	
	\begin{lemma}\label{lem:fvs:equivalent}
		Let $t=(\bX,\bY,\bW)\in \cT$ and $X,W\subseteq A$ compatible with $t$. If $X$ and $W$ are $t$-equivalent, then, for every $Y\subseteq \comp{A}$ compatible with $t$, we have $G[X\cup Y]$ is a forest if and only if $G[W\cup Y]$ is a forest.
	\end{lemma}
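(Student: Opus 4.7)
My plan is to prove the symmetric statement: $G[W \cup Y]$ is a forest if and only if $G[X \cup Y]$ is a forest. The approach reduces both conditions to a common combinatorial invariant determined only by $H := G[\bX \cup Y]$ and by the partition of $S := \bX \cup \bY$ into connected components of $G[W \cup \bY]$ (equivalently, of $G[X \cup \bY]$, by $t$-equivalence).

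First I would verify a ``no cross edges'' property: $G$ has no edge between $W \setminus \bX$ and $Y \setminus \bY$, and none between $X \setminus \bX$ and $Y \setminus \bY$. Splitting on whether $\bW \subseteq A$ or $\bW \subseteq \comp{A}$ and combining Properties~\ref{item:A:W} and~\ref{item:compA:W}: in the first case both $N(W \setminus \bX) \cap \comp{A}$ and $N(X \setminus \bX) \cap \comp{A}$ equal $N(\bW) \cap \comp{A}$, which is disjoint from $Y \setminus \bY$; the other case is dual. This yields
\[
E(G[W \cup Y]) = E(G[W \cup \bY]) \cup E(G[\bX \cup Y]),\quad E(G[X \cup Y]) = E(G[X \cup \bY]) \cup E(G[\bX \cup Y]).
\]

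Now set $F_W := G[W \cup \bY]$ and $F_X := G[X \cup \bY]$, which are forests (Properties~\ref{item:A:forest},~\ref{item:compA:forest}), and note that $E(F_W) \cap E(H) = E(F_X) \cap E(H) = E(G[S])$. The plan is to compute the cyclomatic number $\beta(G') := |E(G')| - |V(G')| + c(G')$ of each of $F_W \cup H$ and $F_X \cup H$, so that the desired equivalence becomes $\beta(F_W \cup H) = 0 \iff \beta(F_X \cup H) = 0$. Using $|E(F)| = |V(F)| - c(F)$ for each forest $F$, the vertex counts $|V(F_W \cup H)| = |W| + |Y| = |V(F_X \cup H)|$, and the shared intersections above, the comparison collapses to
\[
c(F_W \cup H) - c(F_W) = c(F_X \cup H) - c(F_X),
\]
which counts the number of merges that $H$-edges induce on top of $F$. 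Since $V(F) \cap V(H) = S$, the part of $F$ outside $S$ cannot interact with $H$; contracting each $F$-component to a super-vertex yields a quotient multigraph on $\pi_F \cup (Y \setminus \bY)$ (with $H$-edges relabeled) whose component count determines $c(F \cup H)$. Because $\pi_{F_W} = \pi_{F_X}$ by $t$-equivalence and $H$ is shared, the two quotient multigraphs are identical, so both merge counts agree and $\beta(F_W \cup H) = \beta(F_X \cup H)$.

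The main obstacle is proving the merging-invariance claim rigorously: showing that $c(F \cup H) - c(F)$ depends on $F$ solely through $\pi_F$. The key ingredient is the ``no cross edges'' property from the first step, which ensures that the only interactions between $V(F) \setminus S$ and $V(H) \setminus S$ in $F \cup H$ are mediated by $S$, and hence entirely by the partition $\pi_F$; $F$-components lacking any $S$-vertex remain isolated in $F \cup H$ and contribute equally to both $c(F)$ and $c(F \cup H)$, so they cancel on both sides. Once this invariance is established, the equality of partitions granted by $t$-equivalence delivers the conclusion immediately.
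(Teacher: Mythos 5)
Your proof is correct, and it takes a genuinely different route from the paper's. Both arguments start from the same structural facts -- Properties~\ref{item:A:W} and~\ref{item:compA:W} forbid edges between $X\setminus\bX$ (or $W\setminus\bX$) and $Y\setminus\bY$, so $E(G[X\cup Y])$ splits as $E(G[X\cup\bY])\cup E(G[\bX\cup Y])$ -- but from there the paper argues by contradiction on a hypothetical cycle: it decomposes the cycle into edge-disjoint segments with endpoints in $\bX\cup\bY$, each lying in $G[X\cup\bY]$ or $G[\bX\cup Y]$, and uses $t$-equivalence to replace the former segments by walks in $G[W\cup\bY]$, recovering a cycle in $G[W\cup Y]$. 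You instead prove the stronger statement that the cyclomatic numbers of $G[X\cup Y]$ and $G[W\cup Y]$ coincide, by reducing both to the quantity $c(F\cup H)-c(F)$ and showing this depends on $F\in\{G[X\cup\bY],G[W\cup\bY]\}$ only through the induced partition of $\bX\cup\bY$ into $F$-components, which $t$-equivalence fixes. Your route buys a cleaner treatment of the delicate step in the paper's proof (arguing that the closed walk obtained after segment replacement genuinely contains a cycle), at the cost of the quotient-multigraph bookkeeping you correctly flag as the main thing to nail down; your sketch of it (components of $F$ disjoint from $\bX\cup\bY$ are isolated in the quotient and cancel) is sound. One small slip: $|V(F_X\cup H)|=|X|+|Y|$, not $|W|+|Y|$, so the two union graphs need not have equally many vertices; this is harmless because in the computation the $|X|$ (resp.\ $|W|$) contributed by $|E(F)|=|V(F)|-c(F)$ cancels against the one in $|V(F\cup H)|$, and the comparison still collapses to exactly the identity you state.
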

	\begin{proof}
		Let $Y\subseteq \comp{A}$ compatible with $t$. 
		Assume that $X$ and $W$ are $t$-equivalent and $G[X\cup Y]$ contains a cycle $C$.
		According to Property~\ref{item:A:forest}, the graph $G[X\cup \bY]$ is a forest, hence $C$ contains at least one vertex $y$ in $Y\setminus \bY$.
		Properties~\ref{item:A:W} and~\ref{item:compA:W} guarantee that there is no edge between $X\setminus \bX$ and $Y\setminus \bY$.
		Consequently, every edge between $X$ and $Y$ has an endpoint in $\bX\cup \bY$.
		We deduce that $C$ is the concatenation of $\ell\ge 1$ edge-disjoint paths $P_1,\dots,P_\ell$ such that for each $i\in [\ell]$ we have:
		\begin{itemize}
			\item $P_i$ is a non-empty \textit{path} with endpoints in $\bX\cup \bY$ and internal vertex not in $\bX\cup \bY$ and $P_i$ is a path of $G[X\cup \bY]$ or $G[\bX \cup Y]$.
		\end{itemize}
		Without loss of generality, suppose that $P_1$ is the path going through $y$.
		Every path $P_i$ (including $P_1$) that lies in $G[\bX\cup Y]$ is a path of $G[W\cup Y]$ because $\bX\subseteq W$.
		Moreover, as $X$ and $W$ are $t$-equivalent, every path $P_i$ lying in $G[X\cup \bY]$ can be replaced by a path in $G[W\cup \bY]$.
		By applying these replacements on the concatenation of the paths $P_2,\dots,P_\ell$ we obtain a walk $P$ of $G[W\cup Y]$ between the endpoints of $P_1$ such that $P_1$ and $P$ are edge-disjoint.
		Hence, $G[W\cup Y]$ contains a cycle.
		We conclude that $G[X\cup Y]$ is a forest if and only if $G[W\cup Y]$ is a forest.
	\end{proof}
	
	The following lemma shows how to compute small $(\FVS,A)$-representative sets.
	
	\begin{lemma}\label{lem:fvs:reduce}
		Let $\omim(A)=k$.
		Given a collection $\cA\subseteq 2^{A}$, we can compute in time $\abs{\cA} n^{\OO(k)}$ a subset $\cB$ of $\cA$ such that $\cB$ $(\FVS,A)$-represents $\cA$ and $\abs{\cB}\le n^{6 k} (4k)^{4k}$.
	\end{lemma}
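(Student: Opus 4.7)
The plan is to turn the machinery of triples $\cT$ and the equivalence $\sim_t$ into an explicit filtering scheme. Concretely, I would enumerate every triple $t = (\bX, \bY, \bW) \in \cT$; for each $t$, let $\cA_t$ be the subcollection of sets in $\cA$ that are compatible with $t$, partition $\cA_t$ into the classes of $\sim_t$, and place in $\cB$ a maximum-weight representative of each non-empty class.

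The size bound is then essentially bookkeeping. The number of triples is at most $n^{6k}$ since each of $\bX, \bY, \bW$ is a subset of $V(G)$ of size at most $2k$. For fixed $t$, the class of any $X \in \cA_t$ under $\sim_t$ is determined by the partition of $\bX \cup \bY$ into connected components of $G[X \cup \bY]$, which is a partition of a set of size at most $4k$; there are at most $(4k)^{4k}$ such partitions. Multiplying gives $|\cB| \le n^{6k}(4k)^{4k}$.

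Correctness reduces to showing $\best_\FVS(\cA, Y) \le \best_\FVS(\cB, Y)$ for every $Y \subseteq \comp{A}$ (the reverse is immediate from $\cB \subseteq \cA$). Assuming $\best_\FVS(\cA, Y) \ne -\infty$, fix an $X \in \cA$ realising it, so $X \cup Y$ induces a forest. Lemma~\ref{lem:fvs:triple} supplies a triple $t \in \cT$ compatible simultaneously with $X$ and with $Y$, so $X \in \cA_t$, and the algorithm has placed in $\cB$ some $W \in \cA_t$ with $X \sim_t W$ and $w(W) \ge w(X)$. Since the same $Y$ remains compatible with this $t$, Lemma~\ref{lem:fvs:equivalent} ensures $G[W \cup Y]$ is a forest, so $w(W)$ witnesses the desired inequality.

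For the running time, the value $k = \omim(A)$ can be computed in $n^{\OO(k)}$ time as in Lemma~\ref{lem:omim:IS}, and enumerating $\cT$ takes $n^{\OO(k)}$ time. For each pair $(X, t) \in \cA \times \cT$, polynomial-time work suffices to decide compatibility and to read off the connectivity pattern of $\bX \cup \bY$ in $G[X \cup \bY]$, after which grouping by $\sim_t$-class and selecting an argmax in each class is straightforward; the total is $|\cA| \cdot n^{\OO(k)}$. The only delicate conceptual point is already absorbed into Lemmas~\ref{lem:fvs:triple} and~\ref{lem:fvs:equivalent}: $\cT$ must be expressive enough to certify \emph{every} crossing forest $X \cup Y$ with a single triple compatible with both sides, and $\sim_t$ must be fine enough to preserve forestness when $Y$ is paired with any $\sim_t$-equivalent replacement of $X$. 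With these two lemmas in hand, the present step becomes essentially routine.
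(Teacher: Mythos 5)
Your proposal is correct and follows essentially the same route as the paper's proof: build $\cB$ by keeping a maximum-weight representative of each $\sim_t$-class for each triple $t\in\cT$, bound $|\cB|$ by $n^{6k}(4k)^{4k}$ via the count of triples and of partitions of $\bX\cup\bY$, and derive representativity directly from Lemmas~\ref{lem:fvs:triple} and~\ref{lem:fvs:equivalent}. The running-time accounting also matches, so there is nothing to add.
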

	\begin{proof}
		Let $\cA\subseteq 2^{A}$.
		We compute $\cB$ from the empty set as follows: for each triple $t\in \cT$ and $t$-equivalence class $\cC$ over $\cA$, we add to $\cB$ a set $X\in \cC$ of maximum weight.
		
		\paragraph*{Correctness.} 		Since every triple of $\cT$ consist of 3 sets of vertices whose sizes are at most $2k$, we have $\abs{\cT}\leq n^{6k}$.
		For each triple $(\bX,\bY,\bW)\in \cT$, the number of $t$-equivalence classes is the number of partitions of $\bX\cup \bY$ which is at most $(4k)^{4k}$.
		We deduce that $\abs{\cB}\leq n^{6 k} (4k)^{4k}$.
		
		To prove that $\cB$ $(\FVS,A)$-represents $\cA$, we need to prove that for every $Y\subseteq \comp{A}$, we have $\best_\FVS(\cA,Y)=\best_\FVS(\cB,Y)$.
		Let $Y\subseteq \comp{A}$.
		Assume that $\best_\FVS(\cA,Y)\neq \emptyset$ (otherwise $\best_\FVS(\cA,Y)=\best_\FVS(\cB,Y)=-\infty$ because $\cB\subseteq \cA$).
		Let $X\in\cA$ such that $G[X\cup Y]$ is a forest and $w(X)=\best_{\FVS}(\cA,Y)$.
		\Cref{lem:fvs:triple} implies the existence of a triple $t\in \cT$ compatible with $X$ and $Y$.
		By construction, $\cB$ contains a set $W$ compatible with $t$ such that $w(X)\leq w(W)$ and $X$ and $W$ are $t$ equivalent.
		From \Cref{lem:fvs:equivalent}, we deduce that $G[W\cup Y]$ is a forest.
		We conclude that $w(X)=\best_{\FVS}(\cA,Y)=\best_{\FVS}(\cB,Y)=w(W)$.	
		
		\paragraph*{Running time.} We can enumerate $\cT$ in time $n^{\OO(k)}$.
		Moreover, for each triple $t\in \cT$, checking whether two partial solutions are $t$-equivalent can be done in time $\OO(n^{2})$.
		As $\abs{\cB}\leq n^{6k}(4k)^{4k}$, we deduce that $\cB$ can be computed in time $\abs{\cA}n^{\OO(k)}$ with standard algorithmic techniques.
	\end{proof}
	
	The next theorem follows from  \Cref{thm:reduce,lem:fvs:reduce}.
	\begin{theorem}\label{thm:omim:fvs}
		Given an $n$-vertex graph with a branch decomposition of o-mim-width $w$, we can solve \textsc{Feedback Vertex Set} in time $n^{\OO(w)}$.
	\end{theorem}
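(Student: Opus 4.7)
The plan is to apply Theorem~\ref{thm:reduce} directly with $\cS = \FVS$, using Lemma~\ref{lem:fvs:reduce} as the required representative-set routine, in complete parallel to how Theorem~\ref{thm:omim:IS} follows from Lemma~\ref{lem:omim:IS}. As already noted in the setup, computing a minimum-weight feedback vertex set reduces to computing a maximum-weight induced forest by taking complements, and $\FVS$ was defined precisely as the family of vertex subsets inducing a forest, so it suffices to find a maximum-weight element of $\FVS$.

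First, I would invoke Lemma~\ref{lem:fvs:reduce}: for every cut $A \subseteq V(G)$ and every collection $\cA \subseteq 2^{A}$, it produces, in time $|\cA|\cdot n^{\OO(\omim(A))}$, a $(\FVS,A)$-representative subset $\cB \subseteq \cA$ of size at most $n^{6\omim(A)}(4\omim(A))^{4\omim(A)}$. The next step is to verify that this fits the hypothesis of Theorem~\ref{thm:reduce}, which demands the representative set to have size $n^{c\cdot\omim(A)}$ for some constant $c$ independent of $k=\omim(A)$. We may assume $k \le n$ (otherwise there is nothing to prove, as $k$ is bounded by the trivial matching bound), so $(4k)^{4k} \le (4n)^{4k} \le n^{8k}$ for all sufficiently large $n$, and hence $n^{6k}(4k)^{4k} \le n^{14k}$; the small-$n$ cases are handled in constant time by brute force. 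Thus the routine of Lemma~\ref{lem:fvs:reduce} satisfies the hypothesis of Theorem~\ref{thm:reduce} with, say, $c=14$.

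Second, applying Theorem~\ref{thm:reduce} to the given branch decomposition of o-mim-width $w$ yields, in time $n^{\OO(w)}$, a collection of size $n^{\OO(w)}$ that contains a maximum-weight induced forest $F$ of $G$. Its complement $V(G) \setminus V(F)$ is then a minimum-weight feedback vertex set, completing the proof.

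The main obstacle is essentially administrative rather than mathematical: one must absorb the $(4k)^{4k}$ factor from Lemma~\ref{lem:fvs:reduce} into the $n^{\OO(k)}$ form required by Theorem~\ref{thm:reduce}, which is fine as long as the o-mim-width is polynomially bounded in $n$ (always the case here). Beyond that, the theorem is a one-line combination of the two earlier results.
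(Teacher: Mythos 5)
Your proposal is correct and matches the paper's proof exactly: Theorem~\ref{thm:omim:fvs} is stated in the paper as an immediate combination of Theorem~\ref{thm:reduce} and Lemma~\ref{lem:fvs:reduce}. Your extra remark about absorbing the $(4k)^{4k}$ factor into the $n^{\OO(k)}$ bound is a valid (and harmless) bit of bookkeeping that the paper leaves implicit.
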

	Note that \Cref{thm:omim:algo} is the combination of \cref{thm:omim:IS,thm:omim:fvs}.

	\subsection{Relation between o-mim-width and tree-independence number}
	\label{subsec:relomimwidth}
	We show that the o-mim-width of a graph is upper bounded by its tree-independence number.
	We start with the standard notion of nice tree decompositions.
	
	A rooted tree decomposition is a tree decomposition where one node $r \in V(T)$ is designated as the root.
	A nice tree decomposition is a rooted tree decomposition, where the bag of the root node and every leaf node is empty, and every other node is either (1)~an \emph{introduce-node} that has one child node and whose bag is the bag of the child node plus one vertex, (2)~a \emph{forget-node} that has one child node and whose bag is the bag of the child node minus one vertex, or (3)~a \emph{join-node} that has two children and whose bag is equal to the bags of both of its children.
	The following classic lemma shows that any tree decomposition can be turned into a nice tree decomposition.
	
	\begin{lemma}[See e.g.~\cite{DBLP:books/sp/CyganFKLMPPS15}]
		\label{lem:nicetd}
		Let $(T, \bag)$ be a tree decomposition of a graph $G$.
		There exists a nice tree decomposition $(T', \bag')$ of $G$ so that every bag of $(T', \bag')$ is a subset of some bag of $(T,\bag)$.
	\end{lemma}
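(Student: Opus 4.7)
The plan is to build $(T', \bag')$ by successively modifying $(T, \bag)$ through local operations, each of which only creates bags that are subsets of existing bags of $(T, \bag)$, while maintaining the two tree-decomposition axioms.

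First I would root $T$ at an arbitrary node and binarize it: any node with $d \ge 3$ children is replaced by a full binary tree on those children whose internal nodes are all assigned the same bag as the original node. Since no new bag values are introduced, the subset property and both tree-decomposition axioms are trivially preserved, and every non-leaf node of the resulting tree has exactly two children.

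Next, for every edge $uv$ of the rooted tree (with $u$ the parent) satisfying $\bag(u) \neq \bag(v)$, I would subdivide it with a chain of $|\bag(u) \triangle \bag(v)|$ new nodes whose bags are
\[
\bag(v) \supset B_1 \supset \cdots \supset \bag(u) \cap \bag(v) \subset \cdots \subset B_m \subset \bag(u),
\]
obtained by first forgetting the vertices of $\bag(v) \setminus \bag(u)$ one at a time and then introducing the vertices of $\bag(u) \setminus \bag(v)$ one at a time. Every intermediate bag sits between $\bag(u) \cap \bag(v)$ and one of $\bag(u), \bag(v)$, hence is a subset of an original bag. After this step, a binary node becomes a valid join node (its two chain-children carry exactly its own bag), and a unary node becomes a chain of introduce and forget nodes. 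Finally, I would append at the root a chain of forget nodes descending to the empty bag and at each original leaf a chain of introduce nodes ascending from the empty bag, so that the root and all leaves of $(T', \bag')$ are empty; each such intermediate bag is again a subset of the corresponding root or leaf bag.

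The main technical point to verify is that the vertex-connectivity axiom is preserved by the subdivision step. On any chain inserted along edge $uv$, the set of nodes whose bag contains a fixed vertex $w$ is either the entire chain (if $w \in \bag(u) \cap \bag(v)$), a contiguous prefix ending just before $w$ is forgotten (if $w \in \bag(v) \setminus \bag(u)$), a contiguous suffix starting when $w$ is introduced (if $w \in \bag(u) \setminus \bag(v)$), or empty (otherwise). In each case this segment glues onto the subtrees of nodes containing $w$ at the endpoints $u$ and $v$ to form a connected subtree in $(T', \bag')$, which is the only nonobvious verification; the edge-cover axiom is immediate because the original bags still occur among $\bag'$.
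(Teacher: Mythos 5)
The paper does not actually prove this lemma; it is stated with a pointer to the standard textbook reference, so there is no in-paper argument to compare against. Your construction is the standard one (binarize, subdivide each tree edge into a forget-then-introduce chain, pad the root and leaves down to empty bags, and check that the trace of each vertex remains a connected subtree), and the connectivity verification you single out is indeed the only nontrivial point; your case analysis there is correct.

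Two bookkeeping slips are worth fixing. First, your chain along an edge $uv$ ends at a node $B_m \subsetneq \bag(u)$, yet you then claim that a binary node $u$ "becomes a valid join node (its two chain-children carry exactly its own bag)". As written, $u$'s two children carry proper subsets of $\bag(u)$, which matches none of the three allowed node types; you need each chain to terminate in a node whose bag is exactly $\bag(u)$ (equivalently, first give $u$ two children that are copies of its own bag and subdivide below them). Second, a unary edge with $\bag(u)=\bag(v)$ is left untouched by your subdivision rule, leaving a one-child node that is neither an introduce nor a forget node; such edges should be contracted. Both are trivial to repair and do not affect the subset property or the connectivity argument. Also, your claim that after binarization "every non-leaf node has exactly two children" contradicts your later (correct) handling of unary nodes; just drop it.
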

	
	We say that a branch decomposition is \emph{on a set} $V(G)$ if it is a branch decomposition of some function $\f : 2^{V(G)} \rightarrow \bZ_{\ge 0}$.
	Next we give a general lemma for turning tree decompositions of $G$ into branch decompositions on $V(G)$.
	
	\begin{lemma}
		\label{lem:tdintobd}
		Let $(T, \bag)$ be a tree decomposition of a graph $G$.
		There exists a branch decomposition $(T', \delta)$ on the set $V(G)$ so that  for every bipartition $(A,\overline{A})$ of $V(G)$ given by an edge of $(T', \delta)$, there exists a bag of $(T, \bag)$ that contains either $N(A)$ or $N(\overline{A})$. 
	\end{lemma}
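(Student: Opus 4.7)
My plan is to reduce to a nice tree decomposition and build the branch decomposition by attaching each vertex as a pendant leaf at the node where it is forgotten. First I would invoke Lemma~\ref{lem:nicetd} to replace $(T,\bag)$ with a nice tree decomposition $(T_n,\bag_n)$ every bag of which is contained in some bag of $(T,\bag)$. In a nice tree decomposition every vertex $v \in V(G)$ is forgotten at a unique forget-node, which I denote by $t_v$: we have $v \in \bag_n(t_v)$ but $v$ does not appear in the bag of the parent of $t_v$.

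To obtain $T'$, I would (i)~attach to each $t_v$ a fresh pendant leaf $\ell_v$, (ii)~iteratively delete unlabeled leaves (which takes care of chains of introduce-nodes ending in empty-bag leaves, as well as join-branches containing no forget-nodes), and (iii)~smooth every remaining internal node of degree two. Setting $\delta(v)=\ell_v$ then yields a cubic tree: every forget-node has degree three (its two former tree-edges plus the attached pendant), every join-node either keeps its degree three or is itself smoothed once one of its branches is pruned away, and the introduce-nodes, originally of degree two, are all eliminated. Each leaf of $T'$ is labeled by exactly one vertex of $G$, so $\delta$ is a bijection.

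The heart of the argument is checking the neighborhood-in-bag condition on each edge $e$ of $T'$. If $e$ is the pendant edge at some $\ell_v$, then the induced cut is $(\{v\},V(G)\setminus\{v\})$, and $N_G(V(G)\setminus\{v\}) \subseteq \{v\} \subseteq \bag_n(t_v)$, which is a subset of some bag of $(T,\bag)$. Otherwise $e$ corresponds, after the smoothings, to a path $x=v_1,v_2,\ldots,v_k=y$ in $T_n$ whose intermediate nodes $v_2,\ldots,v_{k-1}$ are all introduce-nodes. Since labeled leaves are attached only at forget-nodes, no node along this path carries a label, so the cut $(A,\overline{A})$ of $V(G)$ induced by $e$ coincides with the cut induced by the single $T_n$-edge $v_1v_2$. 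By the standard connectivity property of tree decompositions, any $w \in N_G(A)$ must appear in bags on both sides of $v_1v_2$ in $T_n$, and therefore $w \in \bag_n(v_1) \cap \bag_n(v_2)$; thus $N_G(A) \subseteq \bag_n(v_1)$, which is contained in some bag of $(T,\bag)$.

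The main obstacle I foresee is bookkeeping: one has to verify that the pruning and smoothing operations leave a well-defined cubic tree whose leaves are in bijection with $V(G)$, and that the cut induced by each edge of $T'$ indeed corresponds, via a single $T_n$-edge, to a cut whose crossing neighborhood lies in a single bag of the nice decomposition. Once this correspondence is set up cleanly, the desired inclusion follows directly from the connected-support property of tree decompositions applied to the witness neighbor $w$.
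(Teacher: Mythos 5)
Your proposal is correct and follows essentially the same route as the paper's proof: pass to a nice tree decomposition, attach a pendant leaf for each vertex at its forget location, prune unlabeled leaves and smooth degree-2 nodes, and use the connected-support property to place the crossing neighborhood in a bag. The only point to state more carefully is the orientation: the claim that every $w \in N_G(A)$ appears on both sides of $v_1v_2$ holds when $A$ is chosen as the side whose labeled leaves all lie below that edge (the other orientation can fail), which is exactly why the lemma asks for $N(A)$ \emph{or} $N(\overline{A})$.
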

	\begin{proof}
		First, we can without loss of generality use Lemma~\ref{lem:nicetd} to assume that $(T, \bag)$ is a nice tree decomposition.
		Now, observe that because of the condition (2) of tree decompositions, there is a bijection between forget-nodes and the vertices $V(G)$, i.e., for every $v \in V(G)$ there exists exactly one forget-node $f_v \in V(T)$ with a child $c_v$  so that $\bag(f_v) = \bag(c_v) \setminus \{v\}$.
		Now, we construct a tree $T''$ from $T$ by inserting a leaf node $l_v$ adjacent to each $f_v$ and then construct $\delta'$ by mapping $v$ to $l_v$.
		The pair $(T'', \delta')$ is not yet a branch decomposition because it contains leaves to which no vertices are mapped and internal degree-2 vertices.

		Before turning $(T'', \delta')$ into a branch decomposition, let us first prove that if $(A,\overline{A})$ is a bipartition of $V(G)$ corresponding to an edge of $T''$, then there exists a bag of $(T, \bag)$ that contains either $N(A)$ or $N(\overline{A})$.
		First, if the edge is between $f_v$ and $l_v$ for some vertex $v \in V(G)$, then the bipartition is of form $(A,\overline{A}) = (\{v\}, V(G) \setminus \{v\})$, and therefore $N(\overline{A}) \subseteq \{v\}$ is contained in a bag of $(T,\bag)$.
		Otherwise, the edge corresponds to an edge of $T$ between some node $t$ and its parent $p$, and we can orient $(A,\overline{A})$ so that $A$ contains the vertices $v \in V(G)$ so that $f_v$ is in the subtree rooted at $t$.
		In this case, we show that $N(A) \subseteq \bag(t)$.
		Let $uv \in E(G)$ so that $u \in A$ and $v \in \overline{A}$.
		Now, the vertex $u$ occurs only in the bags of a subtree rooted at some child of $t$.
		However, $v$ occurs also somewhere else (because $v \in \overline{A}$), so to satisfy the conditions of tree decompositions, $v$ must occur in $\bag(t)$.

		Then, we can turn $(T'', \delta')$ into a branch decomposition $(T', \delta)$ by iteratively deleting leafs to which no vertices are mapped and suppressing degree-2 vertices.
		This does not affect the set of bipartitions of $V(G)$ corresponding to the edges of the decomposition.
	\end{proof}
	
	Then we restate Theorem~\ref{thm:omimtin} and prove it using Lemma~\ref{lem:tdintobd}.
	
	\omimtin*
	\begin{proof}
		Let $G$ be a graph with tree-independence number $k$ and $(T, \bag)$ a tree decomposition of $G$ with independence number $\alpha(T, \bag) = k$.
		By applying Lemma~\ref{lem:tdintobd} we turn $(T, \bag)$ into a branch decomposition on $V(G)$ so that for every partition $(A,\overline{A})$ of $V(G)$ given by the decomposition, either $N(A)$ or $N(\overline{A})$ has independence number at most $k$.
		Now, if $N(A)$ has independence number at most $k$, then $\umim(\overline{A}) \le k$, and if $N(\overline{A})$ has independence number at most $k$, then $\umim(A) \le k$, so we have that $\omim(A) \le k$, and therefore the o-mim-width of the branch decomposition is at most $k$.
	\end{proof}

	We also prove the following.
	
	\begin{theorem}\label{thm:simwVsMMHTW}
		Any graph with minor-matching hypertreewidth $k$ has sim-width at most $k$.
	\end{theorem}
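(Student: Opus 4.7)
The plan is to mirror the proof of Theorem~\ref{thm:omimtin} very closely, replacing the tree-independence-number bookkeeping with the analogous reasoning for minor-matching hypertreewidth. Given a graph $G$ with $\tmm(G) = k$, I would start by taking a tree decomposition $(T,\bag)$ with $\mu(T,\bag) \le k$, and apply Lemma~\ref{lem:tdintobd} to turn it into a branch decomposition $(T',\delta)$ on $V(G)$ with the property that for every bipartition $(A,\overline{A})$ of $V(G)$ induced by an edge of $(T',\delta)$, either $N(A)$ or $N(\overline{A})$ is contained in some bag $\bag(t)$ of the original tree decomposition.

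The main step is then to show that this bipartition satisfies $\sw(A) \le k$. Let $M$ be an induced $(A,\overline{A})$-matching of $G$ of maximum size. Each edge of $M$ has one endpoint in $A$ (which lies in $N(\overline{A})$) and one endpoint in $\overline{A}$ (which lies in $N(A)$). If $N(A) \subseteq \bag(t)$, then every edge of $M$ has its $\overline{A}$-endpoint inside $\bag(t)$, so $M$ is an induced matching of $G$ with at least one endpoint of every edge in $\bag(t)$, giving $|M| \le \mu(\bag(t)) \le k$. The symmetric case $N(\overline{A}) \subseteq \bag(t)$ is handled identically using the $A$-endpoints. This yields $\sw(A) \le k$ for every edge of $(T',\delta)$, hence $G$ has sim-width at most $k$.

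No step looks like a genuine obstacle: Lemma~\ref{lem:tdintobd} already does the hard combinatorial work of converting tree decompositions to branch decompositions while preserving a neighborhood-containment property, and the definition of $\mu$ matches exactly what is needed for sim-width (an induced matching with one endpoint of each edge controlled by a bag). The only minor point to be careful about is the directionality: sim-width requires the matching to be induced in $G$ (edges between $A$-side and $\overline{A}$-side endpoints of distinct matching edges must also be absent), but since $M$ is already induced in $G$ by assumption, this causes no difficulty. So the proof reduces to a two-line application of Lemma~\ref{lem:tdintobd} plus the containment argument above.
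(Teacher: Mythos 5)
Your proposal is correct and follows essentially the same route as the paper: apply Lemma~\ref{lem:tdintobd} to the tree decomposition witnessing $\mu(T,\bag)\le k$, and then observe that any induced $(A,\overline{A})$-matching has all of its $\overline{A}$-endpoints in $N(A)$ and all of its $A$-endpoints in $N(\overline{A})$, so whichever of these sets lands in a bag bounds the matching size by $k$. Your handling of the definition of $\mu$ and of the fact that the matching is induced in $G$ matches the paper's argument exactly.
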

	\begin{proof}
		Let $G$ be a graph with minor-matching hypertreewidth $k$ and $(T, \bag)$ a tree decomposition of $G$ with $\mu(T,\bag) = k$.
		By applying Lemma~\ref{lem:tdintobd} we turn $(T, \bag)$ into a branch decomposition on $V(G)$ so that for every partition $(A,\overline{A})$ of $V(G)$ given by the decomposition, either $N(A)$ or $N(\overline{A})$ is contained in a bag of $T$, and therefore either $\mu(N(A)) \le k$ or $\mu(N(\overline{A})) \le k$.
		Note that however, if there would be an induced matching of size $k+1$ between $A$ and $\overline{A}$, then both $\mu(N(A)) > k$ and $\mu(N(\overline{A})) > k$.
		Therefore, the branch decomposition has sim-width at most $k$.
	\end{proof}
	
	\section{Neighbor-depth}
	We start this section with the definition of the graph parameter neighbor-depth.
	
	\begin{definition}
		\label{def:nd}
		The neighbor-depth (\nd) of a graph $G$ is defined recursively as follows:
		\begin{enumerate}
			\item $\nd(G) = 0$ if and only if $V(G) = \emptyset$,
			\item \label{def:nd:case2} if $G$ is not connected, then $\nd(G)$ is the maximum value of $\nd(G[C])$ where $C \subseteq V(G)$ is a connected component of $G$,
			\item \label{def:nd:case3} if $V(G)$ is non-empty and $G$ is connected, then $\nd(G) \le k$ if and only if there exists a vertex $v \in V(G)$ such that $\nd(G \setminus N[v]) \le k-1$ and $\nd(G \setminus \{v\}) \le k$.
		\end{enumerate}
	\end{definition}
	
	In the case (\ref{def:nd:case3}) of Definition~\ref{def:nd}, we call the vertex $v$ the \emph{pivot-vertex} witnessing $\nd(G) \le k$.

	\subsection{Algorithms using neighbor-depth}
	We start by showing that neighbor-depth is monotone under induced subgraphs.
	
	\begin{lemma}
		\label{lem:indsub}
		Let $G$ be a graph and $X \subseteq V(G)$. It holds that $\nd(G[X]) \le \nd(G)$.
	\end{lemma}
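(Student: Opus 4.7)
The plan is to prove the inequality by induction on $|V(G)|$. The base case $|V(G)|=0$ is trivial since then $X=\emptyset$ and both sides equal $0$. For the inductive step, let $k = \nd(G)$ and split according to the structure clause of Definition~\ref{def:nd} that witnesses $\nd(G) \le k$.

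If $G$ is disconnected, every connected component $C$ of $G$ satisfies $|C|<|V(G)|$ and $\nd(G[C]) \le k$. For each connected component $D$ of $G[X]$ there is a unique component $C$ of $G$ with $D \subseteq C$, so the induction hypothesis applied to $G[C]$ and the set $D$ gives $\nd(G[D]) \le \nd(G[C]) \le k$. Taking the max over components of $G[X]$ yields $\nd(G[X]) \le k$.

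If $G$ is connected and non-empty, pick a pivot-vertex $v$ witnessing $\nd(G) \le k$, so $\nd(G \setminus N[v]) \le k-1$ and $\nd(G \setminus \{v\}) \le k$; both of these graphs have strictly fewer vertices than $G$. I will argue component-wise: for each connected component $D$ of $G[X]$, I show $\nd(G[D]) \le k$. If $v \notin D$, then $D \subseteq V(G) \setminus \{v\}$ and the induction hypothesis applied to $G \setminus \{v\}$ gives $\nd(G[D]) \le \nd(G \setminus \{v\}) \le k$. If $v \in D$, then $G[D]$ is connected and non-empty, so by clause~(\ref{def:nd:case3}) it suffices to exhibit a pivot; I will take $v$ itself. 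Since $N_{G[D]}[v] = N_G[v] \cap D$, we have $G[D] \setminus N_{G[D]}[v] = G[D \setminus N_G[v]]$, which is an induced subgraph of $G \setminus N[v]$, so the induction hypothesis yields $\nd(G[D] \setminus N_{G[D]}[v]) \le k-1$; similarly $G[D] \setminus \{v\}$ is an induced subgraph of $G \setminus \{v\}$, so $\nd(G[D] \setminus \{v\}) \le k$. Hence $\nd(G[D]) \le k$.

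The only mildly delicate point is that the recursive clause~(\ref{def:nd:case3}) requires $G[D]$ to be connected before we may invoke a pivot, which is why I reduce to components of $G[X]$ first rather than trying to pivot directly on $G[X]$. Everything else is a matter of observing that $N_G[v] \cap D$ equals $N_{G[D]}[v]$ and that each relevant graph really is an induced subgraph of either $G \setminus \{v\}$ or $G \setminus N[v]$ so that the induction hypothesis applies.
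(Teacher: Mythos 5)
Your proof is correct and follows essentially the same route as the paper's: induction on $|V(G)|$, reducing to connected components of $G[X]$, and reusing the pivot-vertex $v$ of $G$ as the pivot for the component of $G[X]$ containing it. The only cosmetic difference is that the paper first passes to the single component of $G[X]$ realizing $\nd(G[X])$ whereas you argue over all components; the substance is identical.
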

	\begin{proof}
		We use induction on $|V(G)|$.
		The lemma clearly holds when $X = \emptyset$, so the base case of $|V(G)| = 0$ holds and we can assume that $|X| \ge 1$.
		We can assume that $G[X]$ is connected by taking the connected component $X' \subseteq X$ of $G[X]$ with $\nd(G[X']) = \nd(G[X])$.
		
		First, if $G$ is not connected, let $C \subseteq V(G)$ be the connected component of $G$ such that $X \subseteq C$.
		As $|C| < |V(G)|$, we have $\nd(G[X]) \le \nd(G[C])$ by the induction assumption.
		Since $C$ is a connected component of $G$, we have by definition $\nd(G[C]) \le \nd(G)$, and therefore $\nd(G[X]) \le \nd(G)$.

		Second, if $G$ is connected let $v$ be the pivot-vertex witnessing the neighbor-depth of $G$.
		If $v \notin X$, then $X \subseteq V(G) \setminus \{v\}$ and the lemma holds by induction.
		Otherwise, if $v \in X$, then $X \setminus N[v] \subseteq V(G) \setminus N[v]$ and $X \setminus \{v\} \subseteq V(G) \setminus \{v\}$, so the lemma holds by choosing $v$ as the pivot-vertex of $G[X]$ and induction.
	\end{proof}
	
	Then, we show that the neighbor-depth of a graph $G$ can be computed in $n^{\OO(\nd(G))}$ time.
	
	\begin{lemma}
		\label{lem:ndcomp}
		Given an $n$-vertex graph $G$ and integer $k$, it can be decided whether $\nd(G) \le k$ in $\OO(n^{2k+3})$ time.
		In the case when $\nd(G) \le k$ and $G$ is non-empty and connected, the algorithm also outputs the pivot-vertex witnessing the neighbor-depth.
	\end{lemma}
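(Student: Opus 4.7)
Our plan is to design a recursive algorithm that follows Definition~\ref{def:nd} directly, but with an optimization that prevents a naive exponential blowup. The crucial observation is the following: if $G$ is a connected non-empty graph and $v$ is a vertex with $\nd(G \setminus N[v]) \le k-1$ (call such $v$ a \emph{pivot candidate}), then $\nd(G) \le k$ if and only if $\nd(G \setminus \{v\}) \le k$. The ``if'' direction is immediate by taking $v$ as the pivot witness in Definition~\ref{def:nd}(\ref{def:nd:case3}); the ``only if'' direction follows from Lemma~\ref{lem:indsub}, which gives $\nd(G \setminus \{v\}) \le \nd(G) \le k$. Consequently, as soon as a single pivot candidate is identified, the algorithm may commit to it and recurse on $G \setminus \{v\}$ without ever considering alternative candidates.

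With this in hand, the algorithm is: if $V(G) = \emptyset$ return True; if $k = 0$ return False; if $G$ is disconnected, recurse on each connected component and conjunct the outputs; otherwise $G$ is connected and non-empty, and we iterate $v \in V(G)$ computing $\nd(G \setminus N[v]) \le k-1$ recursively---upon finding the first $v$ that succeeds, we return the result of the recursive call on $(G \setminus \{v\}, k)$, outputting $v$ as the pivot witness when this result is True. If no pivot candidate is found, we return False; by the definition this is correct, since no $v$ can serve as a pivot witness.

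For the running time, let $T(n,k)$ denote the time on an $n$-vertex input. The connected non-empty case satisfies $T(n,k) \le n \cdot T(n-1, k-1) + T(n-1, k) + \OO(n^2)$, where the $n \cdot T(n-1, k-1)$ accounts for the pivot-search loop, the single $T(n-1,k)$ corresponds to the committed recursive call on $G \setminus \{v\}$ (this is where the key observation saves us), and the $\OO(n^2)$ absorbs connectivity tests and subgraph construction. The disconnected case gives $T(n,k) \le \sum_i T(n_i, k) + \OO(n^2)$, which combined with the bound $\sum_i n_i^p \le n^p$ for $p \ge 1$ preserves any polynomial upper bound. Unfolding the level-$k$ recursion yields $T(n,k) \le \sum_{m=1}^{n} m \cdot T(m-1, k-1) + \OO(n^3)$, and by induction on $k$ with base $T(n,0) = \OO(n)$ and inductive hypothesis $T(m, k-1) = \OO(m^{2k+1})$, this sum evaluates to $\OO(n \cdot n \cdot n^{2k+1}) + \OO(n^3) = \OO(n^{2k+3})$.

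The main challenge is the observation that justifies committing to the first pivot candidate; without it, the natural algorithm branches $n$-ways on the choice of pivot at every level-$k$ call, producing exponentially many level-$k$ subcalls. Lemma~\ref{lem:indsub} is exactly the tool that decouples the two conditions in Definition~\ref{def:nd}(\ref{def:nd:case3}), turning an exponential branching search into a deterministic find-and-commit routine. Extracting the pivot witness comes for free: the first $v$ passing the level-$(k-1)$ test is precisely the one returned as the pivot.
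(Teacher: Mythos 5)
Your proposal is correct and takes essentially the same route as the paper: the same recursive algorithm, with the same key observation that Lemma~\ref{lem:indsub} lets you commit to the first vertex $v$ satisfying $\nd(G \setminus N[v]) \le k-1$ rather than branching over all candidates. The running-time recurrence you unroll is just a reformulation of the paper's count of at most $n^2$ calls to the level-$(k-1)$ subroutine, yielding the same $\OO(n^{2k+3})$ bound.
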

	\begin{proof}
		The lemma trivially holds for $k = 0$.
		When $k > 0$, we prove the lemma by induction, using the algorithm for checking $\nd(G) \le k-1$ as a subroutine.
		If $G$ is not connected, we solve each connected component independently.
		If $G$ is connected, we use the algorithm for checking $\nd(G) \le k-1$ to test for each $v \in V(G)$ whether $\nd(G \setminus N[v]) \le k-1$.
		If no such $v$ is found, we have by definition $\nd(G) > k$, and therefore return false.
		If such $v$ is found, we have that $\nd(G) \le k$ if and only if $\nd(G \setminus \{v\}) \le k$, where the if direction is by definition and only if direction is by Lemma~\ref{lem:indsub}.
		Therefore we remove $v$ from the graph, and continue the process in each connected component of $G \setminus \{v\}$.
		We make at most $n^2$ calls to the subroutine for checking $\nd(G) \le k-1$, so by induction the total size of the recursion tree is at most $n^{2k}$, and therefore as each recursive step can be implemented in $\OO(n^3)$ time, the total time complexity is $\OO(n^{2k+3})$
	\end{proof}

	\begin{lemma}
		\label{lem:isneighdepth}
		\textsc{Independent set} can be solved in time $n^{\OO(k)}$ on $n$-vertex graphs with neighbor-depth at most $k$.
	\end{lemma}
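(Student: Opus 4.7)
The plan is to design a branching algorithm that directly mimics the recursive structure of Definition~\ref{def:nd}, using Lemma~\ref{lem:ndcomp} to produce the pivot vertex at each step. On input $G$ with $\nd(G)\le k$: if $V(G)=\emptyset$, return the empty set; if $G$ is disconnected, recursively solve each connected component and return the union of the returned sets (which is an independent set because there are no edges between components); and if $G$ is non-empty and connected, run the algorithm of Lemma~\ref{lem:ndcomp} to obtain a pivot vertex $v$ witnessing $\nd(G)\le k$, then recursively compute a maximum-weight independent set $I_1$ in $G\setminus N[v]$ (which has $\nd\le k-1$ by the pivot property) and a maximum-weight independent set $I_2$ in $G\setminus\{v\}$ (which still has $\nd\le k$ by the pivot property, or alternatively by Lemma~\ref{lem:indsub}), and return whichever of $\{v\}\cup I_1$ and $I_2$ has larger weight.

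Correctness follows by induction on $|V(G)|$. In the connected case, every independent set $I$ of $G$ either contains $v$, in which case $I\setminus\{v\}$ is an independent set of $G\setminus N[v]$, or avoids $v$, in which case $I$ is an independent set of $G\setminus\{v\}$; the induction hypothesis applied to the two smaller instances guarantees we take the maximum over both options. In the disconnected case, a maximum-weight independent set is the disjoint union of maximum-weight independent sets in each component, so combining the recursive answers is optimal.

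For the running time, let $T(n,k)$ denote the worst-case runtime on an $n$-vertex graph of neighbor-depth at most $k$. The cost of locating a pivot at the current call, via Lemma~\ref{lem:ndcomp}, is $n^{\OO(k)}$. In the connected branch we therefore have
\[
T(n,k)\ \le\ T(n-1,k-1)\ +\ T(n-1,k)\ +\ n^{\OO(k)}.
\]
In the disconnected branch, if the components have sizes $n_1,\dots,n_m$ with $\sum_i n_i=n$, then the total cost is $\sum_i T(n_i,k)+n^{\OO(1)}$, and the super-additivity inequality $\sum_i n_i^{c}\le (\sum_i n_i)^{c}$ for $c\ge 1$ ensures that this sum is absorbed into $T(n,k)$. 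A straightforward induction on $n+k$ with the ansatz $T(n,k)\le n^{c(k+1)}$ for a sufficiently large constant $c$ then resolves the recurrence and yields $T(n,k)=n^{\OO(k)}$.

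I expect no serious obstacle; the result is essentially an inductive unfolding of Definition~\ref{def:nd}. The only subtle point is the runtime analysis for the "exclude $v$" branch, whose neighbor-depth does not drop: one must observe that the monotonicity statement of Lemma~\ref{lem:indsub} (or the pivot clause directly) keeps $\nd(G\setminus\{v\})\le k$, so that the induction hypothesis still applies, and that disconnection does not inflate the cost thanks to the super-additivity observation above.
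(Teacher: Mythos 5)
Your proposal is correct and is essentially the paper's own proof: the same branch-on-a-pivot-vertex algorithm (pivot found via Lemma~\ref{lem:ndcomp}, components solved independently, maximum of the include/exclude branches returned), with the same correctness argument. The only cosmetic difference is that you bound the running time by solving the recurrence $T(n,k)\le T(n-1,k-1)+T(n-1,k)+n^{\OO(k)}$ with an explicit ansatz, whereas the paper counts at most $n^2$ level-$k$ calls spawning level-$(k-1)$ subproblems and bounds the recursion tree by $n^{2k}$; both yield $n^{\OO(k)}$.
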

	\begin{proof}
		We describe a recursive algorithm that given a graph $G$ returns an optimal independent set in $G$.
		Clearly, if $G$ is empty we can return the empty set, and if $G$ is not connected we can return the union of an optimal independent set in each connected component.
		When $G$ is connected and non-empty, we use \Cref{lem:ndcomp} to compute a pivot-vertex $v$ witnessing $\nd(G)\le k$.
		Then, we compute an optimal independent set $I_v$ of $G\setminus N[v]$ and an optimal independent set $I_{\neg v}$ of $G\setminus \{v\}$.
		If $w(I_v)+w(v) \ge w(I_{\neg v})$, we output $I_v \cup \{v\}$ and otherwise, we output $I_{\neg v}$.
		
		The correctness of this algorithm follows from the fact that for every $v\in V(G)$, an optimal independent set is either (1)~the union of $\{v\}$ and an optimal independent set of $G\setminus N[v]$ or (2)~an optimal independent set of $G\setminus \{v\}$.
		
		As $\nd(G\setminus N[v])\leq k-1$ for every pivot-vertex $v$ witnessing $\nd(G)\le k$, we deduce that we generate at most $n^2$ recursive calls on graphs of neighbor-depth $k-1$.
		Hence, the total size of the recursion tree is at most $n^{2k}$.
		By \Cref{lem:ndcomp}, computing a pivot-vertex can be done in time $n^{2k+3}$. Hence, each recursive step can be implemented in  $n^{O(k)}$ time. 
		We conclude that the running time of this algorithm is $n^{O(k)}$.
	\end{proof}

	\subsection{Neighbor-depth of graphs of bounded sim-width}
	\label{subsec:ndsimwidth}
	In this subsection we show that graphs of bounded sim-width have poly-logarithmic neighbor-depth, i.e., Theorem~\ref{thm:simwidthnbdepth}.
	The idea of the proof will be that given a cut of bounded sim-width, we can delete a constant fraction of the edges going over the cut by deleting the closed neighborhood of a single vertex.
	This allows to first fix a balanced cut according to an optimal decomposition for sim-width, and then delete the edges going over the cut in logarithmic depth.

	We say that a vertex $v \in V(G)$ \emph{neighbor-controls} an edge $e \in E(G)$ if $e$ is incident to a vertex in $N[v]$.
	In other words, $v$ neighbor-controls $e$ if $e \notin E(G \setminus N[v])$.
	
	\begin{lemma}
		\label{lem:control}
		Let $G$ be a graph and $A \subseteq V(G)$ so that $\sw(A) \le k$.
		There exists a vertex $v \in V(G)$ that neighbor-controls at least $|E(A,\overline{A})|/2k$ edges in $E(A,\overline{A})$. 
	\end{lemma}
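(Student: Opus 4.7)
The plan is to use a maximum induced $(A,\overline{A})$-matching as a hitting structure, and then finish by pigeonhole. Let $M$ be a maximum induced $(A,\overline{A})$-matching in $G$, so $|M| \le \sw(A) \le k$, and let $V(M)$ denote its endpoints, with $|V(M)| \le 2k$.

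The key claim is that every edge $e = xy \in E(A,\overline{A})$ has at least one endpoint in $N[V(M)]$. To see this, consider any such $e \notin M$. By maximality of $M$, the set $M \cup \{e\}$ is not an induced $(A,\overline{A})$-matching. Either $e$ shares an endpoint with some edge of $M$ (so $x$ or $y$ lies in $V(M) \subseteq N[V(M)]$), or $G[V(M) \cup \{x,y\}]$ contains some edge outside $M \cup \{e\}$. In the latter case, since $M$ itself is an induced matching, the violating edge cannot lie inside $V(M)$, so it must connect $\{x,y\}$ to $V(M) \setminus \{x,y\}$, again placing $x$ or $y$ in $N(V(M))$. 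For $e \in M$ the claim is immediate.

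Now every edge of $E(A,\overline{A})$ is neighbor-controlled by at least one vertex $v \in V(M)$, since having an endpoint in $N[V(M)] = \bigcup_{v \in V(M)} N[v]$ means having an endpoint in $N[v]$ for some $v \in V(M)$, which is exactly the definition of $v$ neighbor-controlling that edge. Applying pigeonhole over the at most $2k$ vertices of $V(M)$, some $v \in V(M)$ neighbor-controls at least $|E(A,\overline{A})|/2k$ edges of $E(A,\overline{A})$, giving the desired vertex.

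I do not foresee a serious obstacle here. The only subtlety is making sure the ``violating edge must cross between $\{x,y\}$ and $V(M)\setminus\{x,y\}$'' step is airtight, i.e., that we rule out the edge living entirely inside $V(M)$ using the fact that $M$ is \emph{induced}, not merely a matching. With that observation the rest is a one-line pigeonhole.
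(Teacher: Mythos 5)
Your proof is correct and follows essentially the same approach as the paper: both take a maximum induced $(A,\overline{A})$-matching $M$, observe that every edge of $E(A,\overline{A})$ must have an endpoint in $N[V(M)]$ (i.e., be neighbor-controlled by one of the at most $2k$ vertices of $V(M)$), and conclude by pigeonhole (the paper phrases this contrapositively as a contradiction with the maximality of $M$). Your more detailed verification of the key claim is sound.
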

	\begin{proof}
		Suppose the contradiction, i.e., that all vertices of $G$ neighbor-control less than $|E(A,\overline{A})|/2k$ edges in $E(A,\overline{A})$.
		Let $M \subseteq E(A,\overline{A})$  be a maximum induced $(A,\overline{A})$-matching, having size at most $|M| \le \sw(A) \le k$, and let $V(M)$ denote the set of vertices incident to $M$.
		Now, an edge in $E(A,\overline{A})$ cannot be added to $M$ if and only if one of its endpoints is in $N[V(M)]$.
		In particular, an edge in $E(A,\overline{A})$ cannot be added to $M$ if and only if there is a vertex in $V(M)$ that neighbor-controls it.
		However, by our assumption, the vertices in $V(M)$ neighbor-control strictly less than 
		\[|V(M)| \cdot |E(A,\overline{A})|/2k = |E(A,\overline{A})|\]
		edges of $E(A,\overline{A})$, so there exists an edge in $E(A,\overline{A})$ that is not neighbor-controlled by $V(M)$, and therefore we contradict the maximality of $M$.
	\end{proof}

	Now, the idea will be to argue that because sim-width is at most $k$, there exists a balanced cut $(A,\overline{A})$ with $\sw(A) \le k$, and then select the vertex $v$ given by Lemma~\ref{lem:control} as the pivot-vertex.
	Here, we need to be careful to persistently target the same cut until the graph is disconnected along it.
	
	\simwidthnbdepth*
	\begin{proof}
		For integers $n \ge 2$ and $k, t \ge 0$, we denote by $\nd(n, k, t)$ the maximum neighbor-depth of a graph that
		\begin{enumerate}
			\item\label{enum:swnbproof:vert} has at most $n$ vertices,
			\item has sim-width at most $k$, and
			\item\label{enum:swnbproof:sep} has a cut $(A,\overline{A})$ with $\sw(A) \le k$, $|E(A,\overline{A})| \le t$, $|A| \le 2n/3$, and $|\overline{A}| \le 2n/3$.
		\end{enumerate}
		
		We observe that if a graph $G$ satisfies all of the conditions~\ref{enum:swnbproof:vert}-\ref{enum:swnbproof:sep}, then any induced subgraph of $G$ also satisfies the conditions.
		In particular, note that $n$ can be larger than $|V(G)|$, and in the condition~\ref{enum:swnbproof:sep}, the cut should be balanced with respect to $n$ but not necessarily with respect to $|V(G)|$.
		
		We will prove by induction that 
		\begin{equation}
			\label{eq:swnbproof:ind}
			\nd(n, k, t) \le 1+ 4k (\log_{3/2}(n) \cdot \log(n^2+1) + \log(t+1)).
		\end{equation}
		This will then prove the statement, because by Lemma~\ref{lem:balsep} any graph with $n$ vertices and sim-width $k$ satisfies the conditions with $t = n^2$.

		First, when $n \le 2$ this holds because any graph with at most two vertices has neighbor-depth at most one.
		We then assume that $n \ge 3$ and that Equation~(\ref{eq:swnbproof:ind}) holds for smaller values of $n$ and first consider the case $t = 0$.

		Let $G$ be a graph that satisfies the conditions~\ref{enum:swnbproof:vert}-\ref{enum:swnbproof:sep} with $t=0$.
		Because $t=0$, each connected component of $G$ has at most $2n/3$ vertices, and therefore satisfies the conditions with $n' = 2n/3$, $k' = k$, and $t' = (2n/3)^2$.
		Therefore, by induction each component of $G$ has neighbor-depth at most $\nd(2n/3, k, (2n/3)^2)$.
		Because the neighbor-depth of $G$ is the maximum neighbor-depth over its components, we get that
		\begin{align*}
			\nd(G) \le& \nd(2n/3, k, (2n/3)^2)\\
			\le& 1+ 4k (\log_{3/2}(2n/3) \cdot \log((2n/3)^2+1) + \log((2n/3)^2+1))\\
			\le& 1+ 4k ((\log_{3/2}(n)-1) \cdot \log((2n/3)^2+1) + \log((2n/3)^2+1))\\
			\le& 1 + 4k (\log_{3/2}(n) \cdot \log((2n/3)^2+1))\\
			\le& 1 + 4k (\log_{3/2}(n) \cdot \log(n^2+1)),
		\end{align*}
		which proves that Equation~(\ref{eq:swnbproof:ind}) holds when $t=0$.

		We then consider the case when $t \ge 1$.
		Assume that Equation~(\ref{eq:swnbproof:ind}) does not hold and let $G$ be a counterexample that is minimal under induced subgraphs.
		Note that this implies that $G$ is connected, and every proper 
		induced subgraph $G'$ of $G$ has neighbor-depth at most $1+ 4k (\log_{3/2}(n) \cdot \log(n^2+1) + \log(t+1))$.
		We can also assume that $t = |E(A,\overline{A})|$.

		Now, by Lemma~\ref{lem:control} there exists a vertex $v \in V(G)$ that neighbor-controls at least $t/2k$ edges in $E(A, \overline{A})$.
		We will select $v$ as the pivot-vertex.
		By the minimality of $G$, we have that $\nd(G \setminus \{v\}) \le 1+ 4k (\log_{3/2}(n) \cdot \log(n^2+1) + \log(t+1))$, so it suffices to prove that $\nd(G \setminus N[v]) \le 1+ 4k (\log_{3/2}(n) \cdot \log(n^2+1) + \log(t+1)) - 1$.
		Because $v$ neighbor-controls at least $t/2k$ edges in $E(A, \overline{A})$, the graph $G \setminus N[v]$ satisfies the conditions with $n' = n$, $k' = k$, and $t' = t - t/2k$.
		We denote 
		\[\alpha = \frac{t'+1}{t+1} = 1 - \frac{t/2k}{t+1} \le 1 - \frac{t/2k}{2t} \le 1 - \frac{1}{4k}.\]
		Now we have that
		\begin{align*}
			\nd(G) \le& \nd(n, k, t-t/2k) + 1\\
			\le& 2 + 4k (\log_{3/2}(n) \cdot \log(n^2+1) + \log(\alpha \cdot (t+1)))\\
			\le& 2 + 4k (\log_{3/2}(n) \cdot \log(n^2+1) + \log(\alpha) + \log(t+1))\\
			\le& 2 + 4k \log(\alpha) + 4k (\log_{3/2}(n) \cdot \log(n^2+1) + \log(t+1))\\
			\le& 2 - 4k \cdot \frac{1}{4k} + 4k (\log_{3/2}(n) \cdot \log(n^2+1) + \log(t+1))\\
			\le& 1 + 4k (\log_{3/2}(n) \cdot \log(n^2+1) + \log(t+1)),
		\end{align*}
		which proves that Equation~(\ref{eq:swnbproof:ind}) holds when $t \ge 1$, and therefore completes the proof.
	\end{proof}

	\subsection{Neighbor-depth and independent set branching trees}
	
	We define an \emph{independent set branching tree} on a graph $G$ to be a rooted binary tree where 
	
	\begin{enumerate}
		\item each node is labeled with an induced subgraph of $G$,
		\item the root is labeled with $G$,
		\item each leaf is labeled with the empty graph,
		\item if a non-leaf node is labeled with the induced subgraph $G[X]$, then either
		\begin{enumerate}
			\item the node is a \emph{branching node}, in which case there exists a vertex $v \in X$ and the node has two children, with left child labeled with $G[X \setminus N[v]]$ and the right child labeled with $G[X \setminus \{v\}]$, or
			\item the node is a \emph{decomposition node}, in which case there exists a partition $(C_1, C_2)$ of $V(G)$ into two non-empty parts with no edges between and the node has two children labeled with $G[C_1]$ and $G[C_2]$.
		\end{enumerate}
	\end{enumerate}
	
	The size of an independent set branching tree is the number of nodes of it.
	Let $\beta(G)$ denote the smallest size of an independent set branching tree on $G$.
	We show that the neighbor-depth of $G$ both upper and lower bounds $\beta(G)$, in particular that $2^{\nd(G)} \le \beta(G) \le n^{\OO(\nd(G))}$.
	We start with the upper bound.
	
	\begin{lemma}
		There is an algorithm that given an $n$-vertex graph $G$, computes an independent set branching tree on $G$ of size $n^{\OO(\nd(G))}$ in time $n^{\OO(\nd(G))}$.
	\end{lemma}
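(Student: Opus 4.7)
The plan is to design a recursive algorithm that mirrors the recursive definition of neighbor-depth. Given $G$, the algorithm proceeds as follows. If $V(G) = \emptyset$, output a single leaf node labeled with the empty graph. If $G$ is disconnected, pick any connected component $C$, create a decomposition node with children labeled $G[C]$ and $G[V(G) \setminus C]$, and recurse on each side. If $G$ is connected and non-empty, use Lemma~\ref{lem:ndcomp} to compute in time $n^{\OO(\nd(G))}$ a pivot-vertex $v$ witnessing $\nd(G) \le k$ where $k = \nd(G)$; create a branching node on $v$ with children labeled $G \setminus N[v]$ and $G \setminus \{v\}$, and recurse on each side. Correctness (that the output is a valid independent set branching tree) is immediate from the construction, using Lemma~\ref{lem:indsub} to justify that the recursive calls stay within the neighbor-depth budget (the call on $G\setminus N[v]$ has neighbor-depth $\le k-1$ and the call on $G\setminus\{v\}$ has neighbor-depth $\le k$).

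For the size analysis, let $S(n,k)$ denote the maximum number of nodes in the tree produced by the algorithm on any graph with at most $n$ vertices and neighbor-depth at most $k$. I will show by induction on $k$ that $S(n,k) \le (n+1)^{2k+1}$, which gives the desired bound $n^{\OO(\nd(G))}$. The base case $k = 0$ is trivial since the graph must be empty. For the disconnected case the recursion gives $S(n,k) \le 1 + S(n_1, k) + S(n_2,k)$ for a partition $n_1 + n_2 = n$ with $n_1, n_2 \ge 1$, while in the connected case the branching node yields $S(n,k) \le 1 + S(n-1, k-1) + S(n-1, k)$, where the first subproblem's neighbor-depth dropped by one. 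Unfolding the connected recursion along the depth-$k$ chain gives $S(n,k) \le \OO(n) + \sum_{i=0}^{n-1} S(n-1-i, k-1)$, and by the inductive hypothesis each summand is at most $n^{\OO(k-1)}$, so the total is $n^{\OO(k)}$. The disconnected case is absorbed by the same bound since $(n_1+1)^{2k+1} + (n_2+1)^{2k+1} + 1 \le (n+1)^{2k+1}$ for $n_1 + n_2 = n$ and $n_1, n_2 \ge 1$.

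For the running time, every node of the output tree is produced in at most one invocation of the algorithm in which we run Lemma~\ref{lem:ndcomp} on a graph of at most $n$ vertices with neighbor-depth at most $k$, costing $n^{\OO(k)}$ time; plus $\OO(n^2)$ overhead to identify connected components and to construct the induced subgraphs to pass to the recursive calls. Multiplying this per-node cost by the tree size $n^{\OO(k)}$ gives a total running time of $n^{\OO(\nd(G))}$, as required.

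The only subtlety is the accounting of the decomposition nodes, which could in principle proliferate if the graph repeatedly falls apart after vertex removals; this is handled cleanly by the superadditivity $(n_1+1)^{2k+1} + (n_2+1)^{2k+1} + 1 \le (n+1)^{2k+1}$ used above, so there is no real obstacle. Everything else is a direct transcription of the recursive structure of Definition~\ref{def:nd} into a tree construction, combined with the pivot-computation procedure from Lemma~\ref{lem:ndcomp}.
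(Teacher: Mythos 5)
Your proposal is correct and is essentially the paper's proof: the paper simply observes that the recursive algorithm of Lemma~\ref{lem:isneighdepth} (branch on the pivot-vertex from Lemma~\ref{lem:ndcomp}, split off connected components) implicitly builds an independent set branching tree of size $n^{\OO(\nd(G))}$, which is exactly the construction you describe. Your explicit recurrence $S(n,k)$ just makes the paper's size bound (at most $n^2$ recursive calls per level of neighbor-depth, hence $n^{\OO(k)}$ nodes) more detailed.
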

	\begin{proof}
		Follows from observing that the algorithm of the proof of Lemma~\ref{lem:isneighdepth} constructs an independent set branching tree of size $n^{\OO(k)}$.
	\end{proof}

	We then prove the lower bound.
	
	\begin{lemma}
		\label{lem:branchinglb}
		Any independent set branching tree on a graph $G$ has at least $2^{\nd(G)}$ nodes.
	\end{lemma}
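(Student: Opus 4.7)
The plan is to induct on the number of nodes of $T$, showing $|T| \geq 2^{\nd(G[X])}$ for every independent set branching tree $T$ whose root is labeled with an induced subgraph $G[X]$. The base case $|T|=1$ forces the root to be a leaf labeled with the empty graph, so $\nd(G[X]) = 0$ and $|T|=1=2^0$. If the root is a decomposition node with no-edge partition $(C_1, C_2)$, then every connected component of $G[X]$ lies entirely in one part, so clause~(\ref{def:nd:case2}) of Definition~\ref{def:nd} gives $\nd(G[X]) = \max(\nd(G[C_1]), \nd(G[C_2]))$, and the induction hypothesis applied to the child subtree on the side achieving this maximum already yields at least $2^{\nd(G[X])}$ nodes.

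The substantive case is a branching node on some $v^* \in X$, with child subtrees $T_v$ and $T_{\neg v^*}$ rooted at labels $G[X \setminus N[v^*]]$ and $G[X \setminus \{v^*\}]$. First suppose $G[X]$ is connected and set $d := \nd(G[X]) \geq 1$. Since $\nd(G[X]) > d-1$, no vertex (in particular $v^*$) can witness clause~(\ref{def:nd:case3}) of Definition~\ref{def:nd} at level $d-1$, hence either $\nd(G[X \setminus N[v^*]]) \geq d-1$ or $\nd(G[X \setminus \{v^*\}]) \geq d$. In the latter case Lemma~\ref{lem:indsub} forces $\nd(G[X \setminus \{v^*\}]) = d$, so induction on $T_{\neg v^*}$ alone gives $|T| \geq 2^d$. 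In the former case, using the monotonicity inequality $\nd(G[X \setminus N[v^*]]) \leq \nd(G[X \setminus \{v^*\}])$ supplied by Lemma~\ref{lem:indsub}, if $\nd(G[X \setminus \{v^*\}])$ were at most $d-2$ then so would be $\nd(G[X \setminus N[v^*]])$, forcing $\nd(G[X]) \leq d-1$ and contradicting $\nd(G[X]) = d$; thus $\nd(G[X \setminus \{v^*\}]) = d-1$. By induction each child subtree has at least $2^{d-1}$ nodes, so $|T| \geq 1 + 2 \cdot 2^{d-1} \geq 2^d$.

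If $G[X]$ is disconnected, pick a connected component $C^*$ of $G[X]$ with $\nd(G[C^*]) = d$. When $v^* \notin C^*$, the subgraph $G[C^*]$ is induced in $G[X \setminus \{v^*\}]$, so Lemma~\ref{lem:indsub} gives $\nd(G[X \setminus \{v^*\}]) \geq d$ and induction on $T_{\neg v^*}$ closes the case. When $v^* \in C^*$, apply the connected-case dichotomy to the branching on $v^*$ inside the connected graph $G[C^*]$: since $C^* \setminus \{v^*\} \subseteq X \setminus \{v^*\}$ and $C^* \setminus N[v^*] \subseteq X \setminus N[v^*]$, Lemma~\ref{lem:indsub} lifts the resulting lower bounds on $\nd$ from the $C^*$-restrictions to the labels of $T_{\neg v^*}$ and $T_v$, and the same counting yields $|T| \geq 2^d$. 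The main obstacle is the key step in the connected branching case: once $\nd(G[X \setminus \{v^*\}]) < d$ it must in fact equal exactly $d-1$; without this finer bound the naive estimate $|T| \geq 1 + 2^{d-1}$ is strictly weaker than $2^d$ for $d \geq 2$.
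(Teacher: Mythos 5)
Your proof is correct and follows essentially the same route as the paper's: induction with a case split on whether the root is a decomposition node or a branching node, using Lemma~\ref{lem:indsub} for monotonicity under induced subgraphs. If anything, your dichotomy at a branching node (either $\nd(G[X\setminus N[v^*]])\ge d-1$, whence both subtrees have size at least $2^{d-1}$, or $\nd(G[X\setminus\{v^*\}])\ge d$, whence the right subtree alone already has size $2^{d}$) is more careful than the paper's one-line claim that $\nd(G\setminus N[v])\ge \nd(G)-1$ holds for every $v$, which as literally stated can fail (e.g.\ for a universal vertex $v$) even though the intended conclusion survives.
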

	\begin{proof}
		We prove the lemma by induction on the size of $G$.
		The base case of the empty graph holds, because the empty graph has neighbor-depth zero and branching tree of size one.
		Then, consider a branching tree of size $\beta(G)$ of a graph $G$.
		First, if the root node is a decomposition node with partition $(C_1, C_2)$, then one of $G[C_1]$ and $G[C_2]$ has neighbor-depth equal to $\nd(G)$, and the lower bound follows by induction.
		Then, if the root node is a branching node with branching vertex $v$, we have that $\nd(G \setminus N[v]) \ge \nd(G)-1$, and therefore also that $\nd(G \setminus \{v\}) \ge \nd(G)-1$, and therefore by induction both the subtree rooted at the left child and the subtree rooted at the right child have sizes at least $2^{\nd(G)-1}$, and therefore the branching tree of $G$ has size at least $2^{\nd(G)}$.
	\end{proof}
	
	
	
	\section{Separations between graph classes}
	\label{sec:paramrel}
	First, we show that $P_6$-free graphs can have unbounded sim-width.
	The same construction also excludes induced cycles of length 5 or more and induced complements of cycles of length 5 or more, i.e., is weakly chordal.
	This answers a question of Kang, Kwon, Str{\o}mme, and Telle about the sim-width of weakly chordal graphs~\cite[Question~3]{DBLP:journals/tcs/KangKST17}.
	
	\begin{proposition}
		\label{thm:swp6}
		For each $n$, there is a graph with $\OO(n^2)$ vertices and sim-width at least $\Omega(n)$ that does not contain induced paths of length 6, induced cycles of length 5 or more, or induced complements of cycles of length 5 or more.
	\end{proposition}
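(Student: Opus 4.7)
I would prove the proposition by exhibiting an explicit family of graphs $G_n$ on $\OO(n^2)$ vertices and verifying the four required properties ($\OO(n^2)$ vertices, sim-width $\Omega(n)$, $P_6$-free, and weakly chordal) separately. The construction combines a ``crossing-matching'' core that forces large induced matchings across every balanced bipartition with additional clique or bipartite edges that forbid the listed induced subgraphs. A natural candidate is a graph built from two sides each of size $\Theta(n^2)$, equipped with clique or partial-clique structure and joined by a carefully chosen bipartite pattern whose biadjacency matrix contains many disjoint induced matchings while respecting the structural constraints.

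\paragraph{Forbidden-subgraph conditions.}
For $P_6$-freeness, I would analyze induced paths directly: within each clique of $G_n$ an induced path uses at most two consecutive vertices (otherwise a clique edge would chord it), and the cross structure between the two sides limits the number of ``jumps'' available, bounding the total path length below $6$. For $C_k$-freeness with $k \ge 5$, a similar clique/neighborhood argument applies (induced cycles are even more constrained, since each vertex has exactly two cycle-neighbors that must be non-adjacent, which is incompatible with clique neighborhoods). For $\overline{C_k}$-freeness with $k \ge 5$, note $\overline{C_5}=C_5$ is already excluded, and $\overline{C_k}$ for $k\ge 6$ contains a prism $\overline{C_6}$; I would enumerate the possible locations of the two triangles of the prism (both on one side, both on the other, or split) and show that in each case the bipartite cross-structure of $G_n$ forbids a $3$-edge matching between two disjoint triangles.

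\paragraph{Sim-width lower bound.}
This is the main technical step. By (the converse direction of) Lemma~\ref{lem:balsep}, it suffices to show that every bipartition $(S, V(G_n)\setminus S)$ of $V(G_n)$ with $|S|, |V(G_n)\setminus S| \le \tfrac{2}{3}|V(G_n)|$ has $\sw(S) \ge \Omega(n)$. Argue by a case analysis on how $S$ intersects the two sides and the bipartite structure of $G_n$: by the symmetry and density of the construction, any balanced bipartition either (a)~splits many ``matching-pairs'' across the cut (yielding an induced matching directly from the crossing pairs), or (b)~splits many cliques of the construction (yielding an induced matching by selecting one crossing clique-edge per clique, with endpoints chosen from pairwise non-adjacent cliques so as to guarantee inducedness). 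Combining these cases with a pigeonhole argument gives $\Omega(n)$ crossing edges forming an induced matching.

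\paragraph{Main obstacle.}
The sim-width lower bound is the primary technical hurdle, because it must hold uniformly over all balanced bipartitions; the construction must be designed so that no balanced cut can simultaneously avoid both ``matching-splits'' and ``clique-splits''. The forbidden-subgraph verifications are more routine structural analyses that follow from the clique and neighborhood structure of $G_n$.
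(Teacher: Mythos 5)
The central gap is that you never actually define the graph $G_n$: everything after your first paragraph is conditional on ``a carefully chosen bipartite pattern'' and ``clique or partial-clique structure'' whose existence is exactly what the proposition asserts. Since the statement is an existence claim, the proof must exhibit a concrete construction and verify the four properties for it; as written, none of your case analyses can be checked, and the tension you yourself flag at the end --- that the cross-pattern must contain an $\Omega(n)$ induced matching against \emph{every} balanced cut while creating no induced $P_6$, no long hole, and no long antihole --- is precisely the design problem that is left unsolved. For comparison, the paper's construction is $K_{n,n}$ (sides $V_1,V_2$ of size $n$) with a new degree-$2$ vertex attached to the two endpoints of each edge. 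The $\Omega(n)$ induced matching across a cut balanced with respect to the $2n$ hub vertices is realized by edges from degree-$2$ vertices on one side of the cut to \emph{distinct} hub vertices on the other (a pigeonhole count), not by edges inside a dense bipartite pattern; and the forbidden subgraphs are excluded because a degree-$2$ vertex whose two neighbours are adjacent can only be an endpoint of an induced path and cannot lie on an induced cycle of length more than three, while $K_{n,n}$ is $P_4$-free and triangle-free. Your gadget-free plan has no analogous mechanism.

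There is also a concrete error in your treatment of antiholes: you claim that $\overline{C_k}$ for $k\ge 6$ contains an induced prism $\overline{C_6}$, so that it suffices to exclude the prism. This is false for odd $k\ge 7$: every $6$-vertex induced subgraph of $\overline{C_7}$ is $\overline{P_6}$, which has $10$ edges, whereas the prism has $9$, so $\overline{C_7}$ contains no induced prism. The standard fix (and the one the paper uses) is a minimum-degree argument: every vertex of $\overline{C_t}$ with $t\ge 6$ has degree $t-3\ge 3$, so low-degree gadget vertices cannot participate and one only needs to exclude $\overline{C_t}$ from the dense core, which is immediate when that core is triangle-free. Your reduction of the sim-width lower bound to balanced cuts via the contrapositive of Lemma~\ref{lem:balsep} is the right idea and matches the paper, but without a concrete graph and with the antihole step broken, the proposal does not constitute a proof.
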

	\begin{proof}
		First, we take a complete bipartite graph with a bipartition $(V_1, V_2)$, with both $V_1$ and $V_2$ containing $n$ vertices.
		Then, for each pair $x,y$ with $x \in V_1$ and $y \in V_2$, we create a new degree-2 vertex $xy$ that is adjacent to $x$ and $y$.
		We claim that this constructions satisfies the statement.

		First, to prove that sim-width is at least $\Omega(n)$, let $k$ denote the sim-width of the construction and let us apply Lemma~\ref{lem:balsep} to find a cut $(A,\overline{A})$ that cuts the set $X = V_1 \cup V_2$ in a balanced manner, i.e., $|X \cap A| \le 4n/3$ and $|X \cap \overline{A}| \le 4n/3$, and has $\sw(A) \le k$.
		Let us permute $(A, \overline{A})$ so that $A$ is the side that contains the most degree-2 vertices, i.e., $A$ contains at least $n^2/2$ degree-2 vertices.
		Note that at most $(2/3)^2 n^2 = (4/9)
		n^2$ degree-2 vertices have both of their neighbors in $A \cap X$, so at least $n^2/2 - (4/9) n^2 = n^2/18$ of the degree-2 vertices in $A$ have at least one neighbor in $\overline{A} \cap X$.
		Therefore, for some $i \in \{1,2\}$, at least $n^2/36$ of the degree-2 vertices in $A$ have a neighbor in $\overline{A} \cap V_i$.
		Every vertex in $V_i$ has $n$ degree-2 neighbors, so there are at least $n/36$ different vertices in $\overline{A} \cap V_i$ that are adjacent to a degree-2 vertex in $A$.
		This gives an induced $(A,\overline{A})$-matching of size at least $n/36$, and therefore the sim-width is at least $n/36$.

		The constructed graph does not contain a $P_6$ because only its endpoints could be among the degree-2 vertices and a complete bipartite graph does not contain a $P_4$.
		Also, the constructed graph does not contain an induced cycle with more than four vertices because none of the degree-2 vertices can belong to such induced cycle, and a complete bipartite graph does not contain induced cycles longer than four.
		For complements of induced cycles, first recall that $\overline{C_5} = C_5$, so by previous sentence it does not contain complements of $C_5$.
		For $t \ge 6$, all vertices of $\overline{C_t}$ have degree more than two, so induced $\overline{C_t}$ could only use vertices of the complete bipartite graph, but it does not contain $\overline{C_t}$.
	\end{proof}
	
	It remains open whether $P_5$-free graphs have bounded sim-width.

	We then show that the minor-matching hypertreewidth of a graph class with bounded clique-width can be unbounded.

	\begin{proposition}
		For each $n$, there is a graph with $\OO(n)$ vertices, clique-width $\OO(1)$, and minor-matching hypertreewidth at least $n$.
	\end{proposition}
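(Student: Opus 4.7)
The plan is to let $H_n$ be the graph obtained from the complete bipartite graph $K_{n,n}$ with parts $A=\{a_1,\dots,a_n\}$, $B=\{b_1,\dots,b_n\}$ by attaching a pendant vertex $a_i'$ to each $a_i$ and $b_i'$ to each $b_i$; this has $4n$ vertices. I would then show that $H_n$ has clique-width at most $6$ and satisfies $\tmm(H_n)\ge n/4$, which after replacing $n$ by $4n$ (so $|V(H_{4n})|=16n=\OO(n)$) yields the proposition.

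For the clique-width bound I would give an explicit expression using six labels: permanent labels $1,2$ for $A$-vertices and their pendants, permanent labels $5,6$ for $B$-vertices and their pendants, and two ``fresh'' labels $3,4$ that by construction always contain at most one vertex at the moment of each join. At step $i$ I introduce $a_i$ with label $3$ and $a_i'$ with label $4$, apply the join between labels $3$ and $4$ (which at that moment adds only the edge $a_ia_i'$), and then relabel $3\to 1$ and $4\to 2$. The pairs $(b_j,b_j')$ are handled analogously, reusing labels $3,4$ and retargeting to $5,6$. A single final join between labels $1$ and $5$ installs all edges of $K_{n,n}$ at once.

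For the lower bound on $\tmm$, I would use that the perfect matching $\{a_ib_i : i\in[n]\}$ in $K_{n,n}$ can be contracted to exhibit $K_n$ as a minor, so $\operatorname{tw}(K_{n,n})\ge n-1$, and since $K_{n,n}$ is a subgraph of $H_n$, the same bound holds for $H_n$. Hence every tree decomposition of $H_n$ contains a bag $X$ with $|X|\ge n$, and by pigeonhole one of $X\cap A$, $X\cap A'$, $X\cap B$, $X\cap B'$ has size at least $n/4$. If, say, $|X\cap A|\ge n/4$, then $\{a_ia_i' : a_i\in X\cap A\}$ is an induced matching in $H_n$ (the subgraph $H_n[A\cup A']$ is a disjoint union of pendant edges, so there are no chords) and each of its edges touches $X$ via its $A$-endpoint; thus $\mu(X)\ge n/4$. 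The remaining three cases are symmetric.

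The main technical obstacle is the clique-width construction, where one has to guarantee that the join adding each pendant edge does not simultaneously add unwanted edges to previously-introduced vertices. The ``fresh labels used exactly once per pair'' mechanism above resolves this. The $\tmm$ lower bound then reduces to the classical $K_n$-minor proof of the treewidth of $K_{n,n}$ together with a one-line pigeonhole on any large bag, so no further difficulty is anticipated.
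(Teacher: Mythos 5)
Your proposal is correct and uses exactly the paper's construction (a pendant vertex attached to every vertex of $K_{n,n}$), with the same two-step verification: a bounded-label clique-width expression that builds the pendant edges with temporary labels and installs $K_{n,n}$ by one final join, and a lower bound obtained from a large bag whose vertices carry many pairwise-nonadjacent pendant edges. The only cosmetic difference is in the lower bound, where the paper invokes the fact that some bag must contain one full side of the $K_{n,n}$, while you derive a bag of size at least $n$ from $\operatorname{tw}(K_{n,n})\ge n-1$ and then apply pigeonhole over the four vertex classes, losing only a harmless constant factor.
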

	\begin{proof}
		We take a complete bipartite graph $K_{n,n}$, and add a degree-1 pendant vertex adjacent to each vertex.
		The construction has bounded clique-width by e.g. first constructing two matchings,  one with labels 1 and 2 on different sides and one with labels 1 and 3 on different sides, and then taking a disjoint union of them and joining on labels 2 and 3.
		The construction has minor-matching hypertreewidth at least $n$, because there must be a bag that contains one side of the $K_{n,n}$ completely, but this gives an induced matching with at least $n$ edges intersecting the bag.
	\end{proof}
	
	We then show that the o-mim-width of a graph class with bounded minor-matching hypertreewidth can be unbounded, notice that by \Cref{thm:simwVsMMHTW}, such class have bounded sim-width.
	
	\begin{proposition}
		For each $n$, there is a graph with $\OO(n^5)$ vertices, minor-matching hypertreewidth $\OO(1)$, and o-mim-width at least $\Omega(n)$.
	\end{proposition}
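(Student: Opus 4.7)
The plan is to exhibit a graph $G_n$ on $O(n^5)$ vertices with minor-matching hypertreewidth $O(1)$ and o-mim-width $\Omega(n)$. The construction I propose consists of a large clique $K$ of size $\Theta(n^4)$ together with a set $X$ of $\Theta(n)$ ``external'' vertices, where each $x \in X$ is adjacent to a designated subset $S_x \subseteq K$ of size $\Theta(n^3)$, with the $S_x$'s pairwise disjoint. I will first verify that the global induced matching number of $G_n$ equals $1$: any two candidate matching edges $x\mbox{--}v$ and $x'\mbox{--}v'$ with $x, x' \in X$, $v \in S_x$, $v' \in S_{x'}$ have $v, v' \in K$ clique-adjacent, violating inducedness; and two edges entirely inside $K$ fail similarly by cliqueness. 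Hence $\mu(V(G_n)) = 1$, and MMHT$(G_n) \le 1$ via the trivial tree decomposition with a single bag $V(G_n)$.

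For the o-mim-width lower bound, I will apply Lemma~\ref{lem:balsep} to the set $X$: every branch decomposition of $G_n$ contains an edge whose cut $(A,\overline{A})$ is balanced on $X$, i.e., $|X \cap A|, |X \cap \overline{A}| \ge n/3$. Writing $X^A = X \cap A$ and $X^{\overline{A}} = X \cap \overline{A}$, I will show $\umim(A) \ge \Omega(n)$ by picking, for each $x \in X^A$, a partner $v_x \in S_x \cap \overline{A}$ and forming the candidate matching $\{x\mbox{--}v_x\}_{x \in X^A}$. This matching is induced in $G - E(\overline{A})$ because (i) externals are pairwise non-adjacent, (ii) the clique edges among the $v_x$'s lie inside $\overline{A}$ and are deleted in $G - E(\overline{A})$, and (iii) the pairwise disjointness of the $S_x$'s precludes any cross adjacency $x\mbox{--}v_{x'}$ for $x \ne x'$. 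Symmetrically one obtains $\umim(\overline{A}) \ge \Omega(n)$, yielding $\omim(A) \ge \Omega(n)$ and hence o-mim-width $\ge \Omega(n)$.

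The main obstacle is ensuring that for each $x \in X^A$ one can actually find a partner $v_x \in S_x \cap \overline{A}$, i.e., that the adversary cannot place every $S_x$ entirely within $A$. Because $|S_x| = \Theta(n^3)$, absorbing a single $S_x$ into $A$ consumes $\Theta(n^3)$ of the adversary's vertex budget, and the total $O(n^5)$ budget of $V(G_n)$ limits the number of fully absorbed $S_x$'s to $O(n^2)$. However, since $|X^A|$ may be as small as $\Theta(n)$, a crude counting bound alone does not yet rule out full absorption of every $S_x$ with $x \in X^A$. Resolving this will require either (a) strengthening the balance argument via a combined application of Lemma~\ref{lem:balsep} that simultaneously controls balance on $X$ and on $V(G_n)$, or (b) replacing the family $\{S_x\}$ by one with an expander-like robustness property guaranteeing that every balanced cut splits many $S_x$'s between $A$ and $\overline{A}$. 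Carrying out this calibration so that the counting matches on both sides of the cut is the technical crux of the proof.
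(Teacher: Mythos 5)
There is a fatal gap: your construction actually has o-mim-width at most $1$, so the lower bound you want is false for it, and no ``calibration'' of the balance argument can repair this. Consider the caterpillar branch decomposition whose leaf order is $x_1, S_{x_1}, x_2, S_{x_2}, \dots, x_m, S_{x_m}$, followed by the vertices of $K \setminus \bigcup_x S_x$. Every edge of this decomposition induces either a singleton cut or a prefix cut $(P,\overline{P})$ with $P = (\{x_1\}\cup S_{x_1})\cup\dots\cup(\{x_j\}\cup S')$ for some $S'\subseteq S_{x_j}$. The only edges crossing such a cut are clique edges between $P\cap K$ and $\overline{P}\cap K$, plus edges from $x_j$ to $S_{x_j}\setminus S'$. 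Now take any two candidate matching edges for $\umim(P)$, i.e.\ an induced $(P,\overline{P})$-matching in $G-E(\overline{P})$. If both have their $P$-endpoints $u_1,u_2$ in $P\cap K$, then $u_1u_2\in E(P)$ survives in $G-E(\overline{P})$ and the matching is not induced; if one edge is $x_jw$ with $w\in S_{x_j}\setminus S'$ and the other is $u_1w_1$ with $u_1\in P\cap K$, then $u_1w$ is a crossing clique edge, which also survives. Hence $\umim(P)\le 1$, so $\omim(P)\le 1$ for every cut of this decomposition. The obstacle you flag at the end (an $S_x$ being wholly absorbed into $A$) is exactly what this adversary does, but the failure is structural rather than a counting issue: to make $\omim(A)=\min(\umim(A),\umim(\overline{A}))$ large you need the endpoints of the crossing matching to form large \emph{independent} sets on \emph{both} sides, and a single clique $K$ can always be kept together with all the private neighborhoods $S_x$ co-located with their $x$, after which the crossing edges all live inside the clique on one side.

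The paper's construction is designed precisely around this difficulty. It builds a path of $n$ levels on $\Theta(n^4)$ vertices each, alternating between a disjoint union of $n^2$ cliques of size $n^2$ and its complement, with perfect matchings between consecutive levels; this keeps $\tmm$ constant because the complement of a cluster graph has induced matching number one. For the lower bound, a balanced cut either splits every odd (connected) level, which immediately yields an induced $(A,\overline{A})$-matching of size $n$, or it leaves one level entirely on one side, forcing $\Omega(n^3)$ matching edges between some pair of consecutive levels to cross the cut; a pigeonhole argument inside the cluster structure then extracts $n/6$ of these crossing edges whose endpoints are independent on \emph{both} levels, which is what gives $\umim(A)\ge n/6$ and $\umim(\overline{A})\ge n/6$ simultaneously. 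If you want to salvage your approach, you would need to replace the clique $K$ and the sets $S_x$ by gadgets in which any large set of crossing edges contains a large subset with independent endpoints on both sides -- which is essentially what the paper's alternating cluster/co-cluster levels achieve.
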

	\begin{proof}
		Let $n \ge 1$ be an integer.
		First, we let $G'$ be the $n^4$-vertex graph that is a disjoint union of $n^2$ cliques each of size $n^2$.
		Then, we construct a graph $G$ as a path-like construction with $n$ levels, so that each even level induces a copy of $G'$, each odd level induces a copy of the complement of $G'$, and consecutive levels are connected by matchings that match the corresponding vertices together.

		We first show that $G$ has $\tmm$ at most $7$.
		Let $V_i$ denote the vertices of the $i$:th level, and let $i$ be odd.
		In particular, $G[V_i]$ is a copy of the complement of $G'$.
		First, we observe that the maximum induced matching in the graph $G[V_i]$ is of size one.
		Then, because the matching between two consecutive levels $V_i$ and $V_{i+1}$ maps vertices of $G[V_i]$ to corresponding vertices of its complement $G[V_{i+1}]$, the maximum induced $(V_i, V_{i+1})$-matching is of size one.
		It follows that $\mu(V_i) \le 3$, and therefore $\mu(V_i \cup V_{i+2}) \le 6$.
		
		We construct a tree decomposition of $G$ as follows.
		We start by creating a path decomposition where each bag contains the union of two consecutive odd levels, i.e., the vertices $V_i \cup V_{i+2}$, for odd $i$.
		The rest of the graph, i.e., the even levels, consists of connected components that are cliques of size $n^2$ whose neighborhoods are contained inside two consecutive odd levels.
		For each such clique $W$, we insert to the decomposition (making it now a tree decomposition instead of path decomposition) a new bag containing the clique and the odd levels adjacent to the clique, i.e., $B = W \cup V_i \cup V_{i+2}$ for some odd $i$, and make it adjacent to the bag containing $V_i \cup V_{i+2}$.
		Because $W$ is a clique, this bag has $\mu(B) \le \mu(W) + \mu(V_i) + \mu(V_{i+2}) \le 1 + 3 + 3 \le 7$, and therefore the constructed tree decomposition of $G$ has $\tmm$ at most $7$.

		We show that the o-mim-width of $G$ is at least $n/6$.
		Let $k$ be the o-mim-width of $G$.
		By Lemma~\ref{lem:balsep}, there exists a bipartition $(A, \overline{A})$ of $V(G)$ so that $\omim(A) \le k$, $|A| \ge n^5/3$, and $|\overline{A}| \ge n^5/3$.
		Because each odd level induces a connected induced subgraph and the odd levels are not adjacent to each other, we get that if all odd levels would intersect both $A$ and $\overline{A}$, there would be an induced $(A,\overline{A})$-matching of size $n$, and therefore the o-mim-width of $G$ would be at least $n$.
		In the other case, there is an odd level that is entirely contained in either $A$ or $\overline{A}$.
		By symmetry, suppose that there is an odd level entirely contained in $A$.
		Now, because $|\overline{A}| \ge n^5/3$, there must be some other level (even or odd) of which at least one third is contained in $\overline{A}$.
		By walking between these two levels, we find two consecutive levels so that at least a $1/(3n)$ fraction of the matching between them crosses the cut $(A, \overline{A})$.
		In particular, (after permuting $(A, \overline{A})$ if necessary) we find some $i$ so that there is a $(V_i \cap A, V_{i+1} \cap \overline{A})$-matching of size at least $n^3/6$.
		Let $M \subseteq V(G')$ be the vertices of $G'$ corresponding to this matching.
		We observe by the pigeonhole principle that $G'[M]$ contains both a clique of size at least $n/6$ and an independent set of size at least $n/6$.
		In particular, the endpoints of the matching must contain an independent set of size at least $n/6$ in both $V_i$ and $V_{i+1}$, which implies that $\umim(A) \ge n/6$ and $\umim(\overline{A}) \ge n/6$, implying that $\omim(A) \ge n/6$, implying that the o-mim-width of $G$ is at least $n/6$.
	\end{proof}

	We then give two results showing that even very limited graph classes have superconstant neighbor-depth.

	\begin{proposition}
		\label{pro:nbdpath}
		The $n$-vertex path has neighbor-depth at least $\Omega(\log n)$.
	\end{proposition}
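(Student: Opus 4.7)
The plan is to reduce the problem to a self-similar recurrence on the length of the path. Let $f(n) := \nd(P_n)$. Since $P_n$ is connected and non-empty for $n \ge 1$, Definition~\ref{def:nd} directly gives
\[ f(n) \ge 1 + \min_{v \in V(P_n)} \nd(P_n \setminus N[v]), \]
because any vertex $v$ witnessing $\nd(P_n) \le k$ must satisfy $\nd(P_n \setminus N[v]) \le k-1$. For any such $v$, the closed neighborhood $N[v]$ has at most $3$ vertices, so $P_n \setminus N[v]$ is a disjoint union of at most two subpaths containing at least $n-3$ vertices in total. By pigeonhole, one of these subpaths, say $P_m$, has at least $\lceil(n-3)/2\rceil$ vertices. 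Using the disconnected-graph rule of Definition~\ref{def:nd} together with Lemma~\ref{lem:indsub} (which implies $f(m) \ge f(\lceil(n-3)/2\rceil)$ since $P_{\lceil(n-3)/2\rceil}$ is an induced subgraph of $P_m$), this yields
\[ f(n) \ge 1 + f\bigl(\lceil (n-3)/2 \rceil \bigr). \]

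Unrolling this recurrence from the base case $f(1) = 1$, a straightforward induction shows that $f(n) \ge k$ whenever $n \ge 3 \cdot 2^{k-1} - 2$, and hence $\nd(P_n) \ge \Omega(\log n)$. I expect no real obstacle: the entire argument rests on the direct implication of Definition~\ref{def:nd}, the trivial structure of $P_n \setminus N[v]$, and the monotonicity of $\nd$ under induced subgraphs provided by Lemma~\ref{lem:indsub}.
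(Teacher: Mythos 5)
Your proof is correct and follows essentially the same strategy as the paper's: deleting $N[v]$ removes at most three vertices from the path, so a long induced subpath survives, and Lemma~\ref{lem:indsub} plus induction gives the logarithmic lower bound. The only cosmetic difference is that you phrase it as a recurrence $f(n) \ge 1 + f(\lceil (n-3)/2\rceil)$ while the paper inducts directly on paths of length $3^k$; both yield $\Omega(\log n)$.
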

	\begin{proof}
		We prove by induction that for all $k \ge 1$, the $3^k$-vertex path $P_{3^k}$ has neighbor-depth at least $k$.
		First, observe that this holds for $k=1$.
		Then, for $k \ge 2$, note that for any choice of $v \in V(P_{3^k})$, the graph $P_{3^k} \setminus N[v]$ contains the graph $P_{3^{k-1}}$ as an induced subgraph, and therefore by Lemma~\ref{lem:indsub} and induction has neighbor-depth at least $k-1$.
		Therefore, $P_{3^k}$ has neighbor-depth at least $k$, which again by Lemma~\ref{lem:indsub} implies that any $n$-vertex path has neighbor-depth at least $\Omega(\log n)$.
	\end{proof}
	
	We then show that cographs can have neighbor-depth $\Omega(\log n)$, showing together with Proposition~\ref{pro:nbdpath} that no graph classes in Figure~\ref{fig:classes} have constant neighbor-depth.
	Note that cographs have neighbor-depth at most $\OO(\log n)$, because any connected cograph contains a vertex of degree at least $n/2$.

	\begin{proposition}
		There are $n$-vertex cographs with neighbor-depth $\Omega(\log n)$.
	\end{proposition}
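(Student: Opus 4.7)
The plan is to construct, for every $k \ge 0$, a cograph $G_k$ on $4^k$ vertices with $\nd(G_k) \ge k+1$, which immediately yields the $\Omega(\log n)$ lower bound. Let $G_0$ be the one-vertex graph; for $k \ge 1$, take four vertex-disjoint copies of $G_{k-1}$ with vertex sets $A, B, C, D$, and let $G_k$ be the graph on $A \cup B \cup C \cup D$ whose edge set consists of the edges of each of the four copies together with every possible edge between $A \cup B$ and $C \cup D$. Equivalently, $G_k$ is the complete join of two disjoint copies of the disjoint union of two copies of $G_{k-1}$; since it is built from single vertices using only disjoint unions and joins, it is a cograph, and clearly $|V(G_k)| = 4^k$.

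For the neighbor-depth bound I would induct on $k$, with base case $\nd(G_0) = 1$ being immediate from Definition~\ref{def:nd}. For the inductive step, $G_k$ is connected (it is a join of two non-empty graphs), so to show $\nd(G_k) \ge k+1$ it suffices to rule out every vertex $v$ as a pivot witnessing $\nd(G_k) \le k$; I will in fact establish the stronger statement that $\nd(G_k \setminus N[v]) \ge k$ for every $v \in V(G_k)$. By symmetry, suppose $v \in A$. Since $v$ is joined by an edge to every vertex of $C \cup D$ (via the outer join) but has no neighbors in $B$ (its disjoint-union partner), we get $N[v] = N_{G_k[A]}[v] \cup C \cup D$, and therefore $G_k \setminus N[v]$ is the induced subgraph on $(A \setminus N_{G_k[A]}[v]) \cup B$, which still contains no edges between $A \setminus N_{G_k[A]}[v]$ and $B$.

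Consequently $G_k[B]$, which is a connected copy of $G_{k-1}$, forms an entire connected component of $G_k \setminus N[v]$. Using the clause of Definition~\ref{def:nd} for disconnected graphs together with the inductive hypothesis, we conclude $\nd(G_k \setminus N[v]) \ge \nd(G_k[B]) = \nd(G_{k-1}) \ge k$, as claimed, completing the induction. I do not anticipate any substantive obstacle; the one point that needs careful verification is that $G_k[B]$ remains an intact component of $G_k \setminus N[v]$ rather than merging with $A \setminus N_{G_k[A]}[v]$, but this is immediate because $A$ and $B$ share no edges in $G_k$ by construction. Setting $n = 4^k$ then gives $\nd(G_k) \ge \tfrac{1}{2}\log_2 n + 1 = \Omega(\log n)$.
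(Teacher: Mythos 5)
Your construction is exactly the paper's (the join of two copies of the disjoint union of two copies of the previous graph), and your inductive argument---showing that for every choice of pivot $v$ the graph $G_k \setminus N[v]$ retains an intact copy of $G_{k-1}$---is the same as the paper's, with the only cosmetic difference that the paper invokes induced-subgraph monotonicity (Lemma~\ref{lem:indsub}) where you use the connected-component clause of the definition. The proposal is correct.
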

	\begin{proof}
		We will define graphs $G_i$ and $G'_i$ recursively for $i \ge 1$.
		Let $G_1$ be a graph with one vertex.
		For each $i$, the graph $G'_i$ is the disjoint union of two copies of $G_i$.
		For $i \ge 2$, the graph $G_i$ is formed by joining two copies of $G'_{i-1}$ by adding a complete bipartite graph between them.
		The graph $G_i$ is a cograph for all $i$.
		The graph $G_i$ is connected, and for any choice of $v \in V(G_i)$, the graph $G_i \setminus N[v]$ contains the graph $G_{i-1}$ as an induced subgraph.
		Therefore, by induction $G_i$ has neighbor-depth at least $i$.
		The number of vertices of $G_i$ is $4^{i-1}$.
	\end{proof}

	\section{Conclusion}
	We conclude with some open problems.
	First, as already discussed, it is still open if independent set can be solved in polynomial-time on graphs of bounded mim-width, because it is not known how to construct a decomposition of bounded mim-width if one exists.
	It would be very interesting to resolve this problem by either giving an algorithm for computing decompositions of bounded mim-width, or by defining an alternative width parameter that is more general than mim-width and allows to solve \textsc{Independent Set} in polynomial-time when the parameter is bounded.

	The class of graphs of polylogarithmic neighbor-depth generalizes several classes where \textsc{Independent Set} can be solved in (quasi)polynomial time.
	Another interesting class where \textsc{Independent Set} can be solved in polynomial-time and which, to our knowledge, could have polylogarithmic neighbor-depth is the class of graphs with polynomial number of minimal separators~\cite{DBLP:journals/siamcomp/FominTV15}.
	It would be interesting to show that this class has polylogarithmic neighbor-depth.
	More generally, Korhonen~\cite{DBLP:conf/icalp/Korhonen21} studied a specific model of dynamic programming algorithms for \textsc{Independent Set}, in particular, tropical circuits for independent set.
	It appears plausible that all graphs with polynomial size tropical circuits for independent set could have polylogarithmic neighbor-depth.
	
	%
	%
	%
	\bibliographystyle{splncs04}
	\bibliography{biblio}
	
\end{document}